\documentclass[aps,pra,reprint,superscriptaddress,amsmath,amssymb,longbibliography]{revtex4-2}

\usepackage{graphicx}
\usepackage{subfigure}
\usepackage{amsthm}
\usepackage{color}
\usepackage{bm}

\newtheorem{Thm}{Theorem}

\newtheorem{Prop}[Thm]{Proposition}
\newtheorem{Cor}[Thm]{Corollary}
\theoremstyle{definition}
\newtheorem{Def}[Thm]{Definition}
\newtheorem{Rem}[Thm]{Remark}
\newtheorem{Ex}[Thm]{Examples}

\newcommand{\ket}[1]{\left|{#1}\right\rangle}
\newcommand{\bra}[1]{\left\langle{#1}\right|}

\newcommand{\ketbra}[2]{\left|{#1}\right\rangle\left\langle{#2}\right|}

\newcommand{\abs}[1]{\left|{#1}\right|}
\DeclareMathOperator{\tr}{Tr}

\begin{document}
\title{Quantum Resources and Performance \\in the Initialization-Free Bernstein-Vazirani Algorithm}
\author{Haesol Han}
    \email{hshan2020@khu.ac.kr}
    \affiliation{
    Department of Mathematics and Research Institute for Basic Sciences,
    Kyung Hee University, Seoul 02447, Korea}
    \affiliation{
    Institute of Fundamental Technological Research, Polish Academy of Sciences, Pawi\'{n}skiego 5B, 02-106 Warsaw, Poland}

\author{Alexander Streltsov}
    \email{streltsov.physics@gmail.com}
    \affiliation{
    Institute of Fundamental Technological Research, Polish Academy of Sciences, Pawi\'{n}skiego 5B, 02-106 Warsaw, Poland}

\author{Soojoon Lee}
    \email{level@khu.ac.kr}
    \affiliation{
    Department of Mathematics and Research Institute for Basic Sciences,
    Kyung Hee University, Seoul 02447, Korea}
    \affiliation{
    School of Computational Sciences, Korea Institute for Advanced Study, Seoul 02455, Korea}

\date{\today} 

\begin{abstract}
Naseri {\it et al.}~[Phys. Rev. A {\bf 106}, 062429 (2022)] studied which quantum resources in initial states are essential for the probabilistic Bernstein-Vazirani (BV) algorithm, defining its performance as the optimal average success probability over all measurements.
In this work, we consider a variant of BV algorithm, which is called the initialization-free (IF) BV algorithm, in which an arbitrary ancilla state as the oracle register is allowed, to improve the performance. 
We derive an explicit formula for the performance of the probabilistic IF-BV algorithm and obtain a necessary and sufficient condition for an initial state to achieve maximal performance. 
We further prove that, under a suitable ordering assumption on the coefficients of the initial state, the probabilistic IF-BV algorithm outperforms the standard probabilistic BV algorithm.
\end{abstract}

\maketitle

\section{Introduction} 
\label{sec:intro}
Quantum algorithms are often studied not only from the viewpoint of query or time complexity, but also through the quantum resources associated with their computational advantage~\cite{ChitambarGour19}.
Entanglement has long been regarded as a central nonclassical resource in quantum information processing~\cite{horodecki2009quantum,PlenioVirmani05}, and the characterization and detection of multipartite entanglement remain active subjects of study~\cite{GuhneToth09,ma2024multipartite}.
For pure-state quantum computation, the growth of multipartite entanglement has been identified as a necessary ingredient for exponential speed-up in broad settings~\cite{JozsaLinden03,Vidal03}.
At the same time, the relation between entanglement and quantum advantage is subtle, as mixed-state models and oracle algorithms can exhibit computational advantages or nonclassical correlations even when entanglement is absent or strongly limited~\cite{KnillLaflamme98,BIHAM200415,Datta08,lanyon2008experimental}.

For instance, separable mixed-state quantum computations have been shown to perform more reliably than the best corresponding classical strategies in certain oracle problems, such as the Deutsch--Jozsa and Simon problems~\cite{BIHAM200415}.
Moreover, deterministic quantum computation with one pure qubit provides a mixed-state model in which nonclassical correlations can arise without entanglement~\cite{KnillLaflamme98,Datta08,lanyon2008experimental}.
These observations motivate a broader perspective on quantum resources, in which coherence and robustness measures also play an important role.
Quantum coherence admits a rigorous formulation as a resource~\cite{Baumgratz14,Streltsov17}.
Robustness measures, first introduced in entanglement theory~\cite{vidal1999robustness,steiner2003generalized}, have since been extended to other quantum resources, including coherence~\cite{Napoli16,Piani16}, and provide useful links between resource quantification and operational tasks, such as discrimination problems~\cite{PianiWatrous09,Napoli16,Piani16,Takagi19}.
This viewpoint is particularly suitable for probabilistic oracle algorithms, where the success of the algorithm is naturally described as an optimal guessing probability.

The Bernstein-Vazirani (BV) algorithm~\cite{Bernstein97} is one of the fundamental examples demonstrating quantum advantage in query complexity.
The goal of the algorithm is to identify an unknown $N$-bit string $\bar{a}$ encoded by a Boolean function $f_{\bar{a}}(\bar{x})=\bar{a}\cdot\bar{x}$ mod $2$ for $N$-bit string $\bar{x}$, where $\bar{s}\cdot\bar{t}=s_1t_1+s_2t_2+\cdots+s_Nt_N$ for $N$-bit strings $\bar{s}=s_1s_2\cdots s_N$ and $\bar{t}=t_1t_2\cdots t_N$.
Throughout this work, we identify each $N$-bit string $\bar{x}\in\left\{0,1\right\}^N$ with the corresponding integer $x\in\left\{0,1,\ldots,2^N-1\right\}$ according to its binary representation.
Under this identification, we write $\ket{x}$ instead of $\ket{\bar{x}}$ when no confusion can arise.
While $N$ oracle queries are generally required in the classical setting, the BV algorithm determines the hidden string with a single query.
This reduction from $N$ queries to a single query in query complexity has made the BV algorithm one of the fundamental examples illustrating the power of quantum computation.

In the standard implementation of the BV algorithm, the oracle unitary $U_{\bar{a}}$ acts on the specific initial state $\ket{-}\ket{+}^{\otimes N}$, that is, 
\[U_{\bar{a}}\ket{-}\ket{+}^{\otimes N} =  \frac{1}{\sqrt{2^N}} \ket{-} \sum_{x=0}^{2^N-1}(-1)^{\bar{a}\cdot\bar{x}}\ket{x}.\]
The action of the oracle imprints the information of the hidden string into the relative phases of the computational basis states, allowing the unknown string to be recovered by applying Hadamard gates followed by measurements in the computational basis.
A natural extension is provided by the probabilistic BV algorithm~\cite{Naseri22}, where the restriction on the initial state is removed and one allows arbitrary quantum states together with general quantum measurements.
In this setting, the success probability depends on the structure of the initial state, and the algorithm no longer achieves perfect performance in general.
Consequently, it becomes meaningful to quantify how suitable a given initial state is for solving the BV problem.
This leads to the notion of the performance of the algorithm, defined as the optimal average probability of correctly identifying the hidden string.
In particular, one may ask which classes of states achieve high performance.

In this work, we investigate a modified oracle 
\begin{align*}
    W_{\bar{a}}=\left( \sigma_z \otimes I \right) U_{\bar{a}} \left( \sigma_z \otimes I \right) U_{\bar{a}},
\end{align*}
which is constructed by sequentially applying the original oracle unitary $U_{\bar{a}}$ twice, interleaved with local phase operations on the oracle register, i.e., on the first qubit system.
Accordingly, the modified protocol requires two oracle queries, in contrast to the single-query structure of the standard BV algorithm.
However, the corresponding protocol, referred to as the initialization-free (IF) BV algorithm~\cite{Chi01}, possesses the advantage that the oracle register may be initialized in an arbitrary state without affecting the perfect identification of the hidden string, since 
\[W_{\bar{a}}\ket{\xi}\ket{+}^{\otimes N} =  \frac{1}{\sqrt{2^N}} \ket{\xi}\sum_{x=0}^{2^N-1}(-1)^{\bar{a}\cdot\bar{x}}\ket{x}\]
for any one-qubit pure state $\ket{\xi}$.
In other words, unlike the BV algorithm, the IF-BV algorithm successfully identifies the hidden string with unit probability even when an arbitrary state is employed as the oracle register.
Thus, although the IF-BV algorithm uses an additional query, it removes a significant initialization constraint present in the standard BV algorithm.

This observation naturally raises the question whether a similar advantage persists in the probabilistic setting.
More specifically, one may ask whether the probabilistic IF-BV algorithm admits a richer performance structure than the probabilistic BV algorithm, and how the performance depends on the choice of the initial state.
If the IF-BV algorithm allows a broader set of successful initializations, then it is natural to expect that its probabilistic version may also allow a broader family of initial states with high performance.
Addressing these questions is the main purpose of the present work.

We derive an explicit expression for the performance of the probabilistic IF-BV algorithm for arbitrary pure initial states and obtain a complete characterization of the states achieving maximal performance.
In addition, we provide a constructive procedure for generating such optimal states and prove that the probabilistic IF-BV algorithm outperforms the standard probabilistic BV algorithm under a suitable coefficient-ordering assumption.
These results demonstrate that the IF-BV framework enlarges the class of high-performance initial states and provides a deeper understanding of the relationship between the structure of the initial state and the performance of the algorithm.
In particular, our results show that maximal performance is governed by a balancing relation between the probability distributions associated with the structure of the initial state, rather than by a simple requirement such as maximal coherence.

\section{Probabilistic IF-BV Algorithm}
The deterministic IF-BV algorithm successfully identifies the hidden bit string with unit probability when the initial state is of the form $\ket{\xi}\ket{+}^{\otimes N}$, where the oracle register may be prepared in an arbitrary state $\ket{\xi}$.
A natural question is to ask how the algorithm behaves when this restriction on the initial state is removed.
To address this question, we consider the probabilistic IF-BV algorithm, in which arbitrary initial states and general quantum measurements are allowed.

More precisely, we assume that no prior information about the hidden bit string $\bar{a}$ is available, so that each of the $2^N$ possible strings occurs with equal probability.
For a given general initial state $\rho$, the oracle unitary $W_{\bar a}$ is applied, producing the output state $W_{\bar{a}}\rho W_{\bar{a}}^{\dagger}$.
A general quantum measurement is then performed on the output state in order to identify the value of the hidden string.
We call this protocol the probabilistic IF-BV algorithm.
Since the algorithm is no longer deterministic for general initial states, the probability of successfully identifying $\bar a$ depends on the choice of the initial state.

The probabilistic IF-BV algorithm can naturally be interpreted as a quantum channel discrimination problem~\cite{watrous2018theory,Naseri22,PianiWatrous09,Takagi19}.
Indeed, each hidden string $\bar{a}$ specifies a corresponding quantum channel through the oracle unitary $W_{\bar a}$, and the task is to determine which channel has been applied by performing a suitable measurement on the output state.
In general, the average success probability in channel discrimination is given by $\sum_i \tr \left[\Phi_i \left( \rho \right) M_i \right]$ optimized over all possible positive operator-valued measures (POVMs) $\{M_i\}$.
Motivated by this perspective, we define the performance of the probabilistic IF-BV algorithm as the optimal average probability of correctly identifying the hidden string $\bar{a}$ as follows.

\begin{Def}[Performance of the probabilistic IF-BV algorithm]
\label{Def:Performance}
For a given initial state $\rho$, the performance of the probabilistic IF-BV algorithm is defined as
\begin{align*}
    P_W(\rho)=\frac{1}{2^N}\max_{\{M_{\bar{a}}\}}\sum_{\bar{a}}\tr \left[ W_{\bar{a}}\rho W_{\bar{a}}^\dagger M_{\bar{a}} \right].
\end{align*}
\end{Def}

To analyze the performance of the probabilistic IF-BV algorithm, it is convenient to introduce a quantitative measure of coherence for the system register on the second $N$-qubit system.
Throughout this work, we employ the robustness of coherence~\cite{Streltsov17,Piani16}, a widely used coherence monotone defined with respect to the computational basis.
Although we will later see that the performance of the algorithm is not determined solely by the coherence of the initial state, the robustness of coherence provides a compact way of expressing a part of the performance formula.

Since our analysis focuses exclusively on pure initial states, it is sufficient to use the following closed expression.
For a pure state $\ket{\phi}=\sum_x C_x \ket{x}$, the robustness of coherence is given by
\begin{align*}
    R\left( \ketbra{\phi}{\phi} \right)=\left( \sum_x\abs{C_x} \right)^2 -1.
\end{align*}
In particular, the robustness of coherence vanishes for incoherent states and attains its maximum value $2^N-1$ for maximally coherent states in the computational basis.

Throughout the following discussion, we consider pure initial states of the form
\begin{align}
\label{pure initial state}
    \ket{\mu}=\alpha\ket{+}\ket{\phi_0}+\beta\ket{-}\ket{\phi_1}, \quad
    \ket{\phi_i}=\sum_{x}C_{i,x}\ket{x}
\end{align}
for computational basis $\left\{\ket{x}\right\}_{x=0,\,\cdots,\,2^{N}-1}$.

We now derive an explicit expression for the performance of the probabilistic IF-BV algorithm for arbitrary pure initial states. 
Semidefinite programming (SDP) is a standard convex optimization framework in quantum information theory, where it is used to formulate quantities such as optimal discrimination probabilities and resource measures~\cite{watrous2018theory}.
The proof of the following theorem uses this framework through the SDP characterization of the robustness of coherence, which provides a convenient way to connect the channel discrimination formulation of the probabilistic IF-BV algorithm with the coherence properties of the system register. 
The resulting formula provides the main tool for our subsequent analysis of the conditions under which maximal performance can be achieved.

\begin{Thm}
\label{Thm:performance}
    The performance of the probabilistic IF-BV algorithm for a pure initial state $\ket{\mu}$ of the form in Eq.~(\ref{pure initial state}) is given as
    \begin{align*}
        P_W(\ket{\mu})
        =& \frac{1}{2^N} \left[1
        +\abs{\alpha}^2R(\ketbra{\phi_0}{\phi_0})
        +\abs{\beta}^2R(\ketbra{\phi_1}{\phi_1})\right]\\
        &+\frac{D}{2^N},
    \end{align*}    
    where     
    \begin{align*}
        D
        &= 2\sum_{x<y}\Biggl[
        \,\prod_{z\in\{x,y\}}
        \sqrt{\abs{\alpha}^2\abs{C_{0,z}}^2+\abs{\beta}^2\abs{C_{1,z}}^2}
        \\&\qquad\qquad
        -\left(\abs{\alpha}^2\abs{C_{0,x}}\abs{C_{0,y}}+\abs{\beta}^2\abs{C_{1,x}}\abs{C_{1,y}}\right) 
        \Biggr].
    \end{align*}

\end{Thm}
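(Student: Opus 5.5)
The plan is to first simplify the oracle $W_{\bar a}$, which turns the problem into discriminating a group-covariant family of pure states. The optimal success probability of such a family is known in closed form, and expanding it gives the stated formula.

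\emph{Step 1: $W_{\bar a}$ acts trivially on the oracle register.} We have $U_{\bar a}\ket{\pm}\ket{x}=(\pm1)^{f_{\bar a}(x)}\ket{\pm}\ket{x}$ and $\sigma_z\ket{\pm}=\ket{\mp}$. Following $\ket{+}\ket{x}$ through the four gates gives
\[
\ket{+}\ket{x}\to\ket{+}\ket{x}\to\ket{-}\ket{x}\to(-1)^{\bar a\cdot\bar x}\ket{-}\ket{x}\to(-1)^{\bar a\cdot\bar x}\ket{+}\ket{x}.
\]
The same computation for $\ket{-}\ket{x}$ shows that $\ket{-}\ket{x}$ also acquires exactly the phase $(-1)^{\bar a\cdot\bar x}$. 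Hence $W_{\bar a}=I\otimes Z_{\bar a}$ with $Z_{\bar a}=\sum_x(-1)^{\bar a\cdot\bar x}\ketbra{x}{x}$. The output states are therefore $\ket{\psi_{\bar a}}=(I\otimes Z_{\bar a})\ket{\mu}$, and the operators $Z_{\bar a}$ form a unitary representation of the group $\mathbb{Z}_2^N$. So we are discriminating a geometrically uniform (group-covariant) ensemble of $M=2^N$ equiprobable pure states.

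\emph{Step 2: Gram matrix.} We compute
\[
\langle\psi_{\bar a}|\psi_{\bar b}\rangle=\bra{\mu}(I\otimes Z_{\bar a\oplus\bar b})\ket{\mu}=\sum_x p_x(-1)^{(\bar a\oplus\bar b)\cdot\bar x}.
\]
Here $p_x=\bra{\mu}(I\otimes\ketbra{x}{x})\ket{\mu}=\abs{\alpha}^2\abs{C_{0,x}}^2+\abs{\beta}^2\abs{C_{1,x}}^2$, and the cross terms vanish because $\langle+|-\rangle=0$. The Gram matrix is diagonalized by the Hadamard transform, with eigenvalues $\lambda_x=2^N p_x$.

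\emph{Step 3: optimal discrimination.} For geometrically uniform pure states, the square-root measurement is optimal. It gives
\[
P=\frac{1}{M^2}\Bigl(\sum_x\sqrt{\lambda_x}\Bigr)^2=\frac{1}{2^N}\Bigl(\sum_x\sqrt{p_x}\Bigr)^2 .
\]
I expect justifying optimality to be the main obstacle. I would not simply cite the general theorem; instead I would verify it through the SDP dual, in the spirit the paper mentions.

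For the primal side, take the covariant rank-one POVM $M_{\bar a}=Z_{\bar a}\ketbra{v}{v}Z_{\bar a}$ with $\ket{v}=2^{-N/2}\sum_x\ket{\chi_x}$. Here $\ket{\chi_x}=(I\otimes\ketbra{x}{x})\ket{\mu}/\sqrt{p_x}$, taking an arbitrary unit vector in the block when $p_x=0$. Completeness holds because $\sum_{\bar a}Z_{\bar a}(\cdot)Z_{\bar a}$ dephases into the blocks $\ketbra{x}{x}$.

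For the dual side, use $Y=\frac{1}{2^N}\bigl(\sum_y\sqrt{p_y}\bigr)\sum_x\sqrt{p_x}\,\ketbra{\chi_x}{\chi_x}$, completed on the orthogonal complement. Feasibility, $Y\ge\frac{1}{2^N}\ketbra{\psi_{\bar a}}{\psi_{\bar a}}$, reduces by covariance to the case $\bar a=0$, which is a Cauchy--Schwarz inequality. Then $\tr Y=\frac{1}{2^N}\bigl(\sum_x\sqrt{p_x}\bigr)^2$, so the primal and dual values coincide and the measurement is optimal.

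\emph{Step 4: algebra.} Expanding the square gives $\bigl(\sum_x\sqrt{p_x}\bigr)^2=1+2\sum_{x<y}\sqrt{p_xp_y}$. On the right-hand side of the theorem, use $R(\ketbra{\phi_i}{\phi_i})=\bigl(\sum_x\abs{C_{i,x}}\bigr)^2-1$ together with $\bigl(\sum_x\abs{C_{i,x}}\bigr)^2=1+2\sum_{x<y}\abs{C_{i,x}}\abs{C_{i,y}}$. With these, the subtracted terms in $D$ cancel exactly against the robustness terms, and the right-hand side also reduces to $1+2\sum_{x<y}\sqrt{p_xp_y}$. This yields the stated formula.
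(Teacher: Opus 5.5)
Your proof is correct and shares the paper's reduction: your $\ket{\chi_x}$ are the paper's $\ket{\psi''_x}$ and $p_x=(C''_x)^2$. Your covariant measurement $Z_{\bar a}\ketbra{v}{v}Z_{\bar a}$ is exactly the paper's lower-bound POVM $M''_{\bar a}=\frac{1}{2^N}W''_{\bar a}X(W''_{\bar a})^{\dagger}$ with $X=2^N\ketbra{v}{v}$.

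The genuine difference is how you certify optimality. The paper identifies $P_W(\ket{\mu})$ with $(1+R''(\ketbra{\mu}{\mu}))/2^N$, where $R''$ is the robustness of coherence in the basis $\{\ket{\psi''_x}\}$. It gets the upper bound from an optimal decomposition $\ketbra{\mu}{\mu}=(1+R'')\sigma-R''\tau$ with $\sigma$ phase-invariant, and then imports the known pure-state formula $R''=(\sum_x C''_x)^2-1$. You instead exhibit an explicit dual-feasible $Y$ for the discrimination SDP itself. The upper bound and the value $(\sum_x\sqrt{p_x})^2/2^N$ then both follow from a single Cauchy--Schwarz inequality, with no appeal to the robustness SDP or the pure-state robustness formula. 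In effect, your pair $(\ket{v},Y)$ reproves that formula in this setting, so your argument is more self-contained. The paper's route, in return, explains why the answer splits into the terms $R(\ketbra{\phi_i}{\phi_i})$ plus $D$, and ties it to the operational meaning of robustness in phase discrimination. You also prove $W_{\bar a}=I\otimes Z_{\bar a}$ as an operator identity; the paper states this only for product inputs and otherwise uses it implicitly.

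One small correction: $\sum_{\bar a}Z_{\bar a}\ketbra{v}{v}Z_{\bar a}=\sum_x\ketbra{\chi_x}{\chi_x}$ is the projector onto the $2^N$-dimensional span of the $\ket{\chi_x}$. It is not the identity on the $2^{N+1}$-dimensional space, because each block $\mathbb{C}^2\otimes\ket{x}$ is two-dimensional. So it is the POVM that needs completing: add $I-\sum_x\ketbra{\chi_x}{\chi_x}$ to any one element. This leaves the success probability unchanged, since every $\ket{\psi_{\bar a}}$ lies in that span. By contrast, $Y$ should simply be taken to be zero on the orthogonal complement; any nonzero completion there would push $\tr Y$ above the primal value.
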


\begin{proof}
    Let $d=2^N$. We first recall that the robustness of coherence with respect to the computational basis can be written as the following SDP:
    \begin{align}
    \label{eq:dual robustness}
        R(\rho)=\max\left\{\tr(\rho X)-1: X\ge 0, E(X)=I\right\},   
    \end{align}
    where $E(X)$ denotes the dephasing map in the relevant incoherent basis. 
    In the binary phase convention used here, it can be expressed as
    \begin{align*}
        E(X)=\frac{1}{d}\sum_{\bar{a}} u_{\bar{a}} X u_{\bar{a}}^\dagger\, , \quad u_{\bar a}=\sum_x \left( -1 \right)^{\bar{a} \cdot \bar{x}} \ketbra{x}{x}
    \end{align*}
    For an input state of the form $\ketbra{\xi}{\xi}\otimes\rho$, the action of $W_{\bar a}$ is equivalent to the action of $V_{\bar a}=\otimes_{i=1}^N\sigma_{z,i}^{a_i}$ on the second register.
    That is $W_{\bar{a}}\ketbra{\xi}{\xi}\otimes\rho W_{\bar{a}}^\dagger = \ketbra{\xi}{\xi} \otimes V_{\bar{a}} \rho V_{\bar{a}}^\dagger$.
    Therefore, the optimal success probability of discriminating the ensemble generated by $V_{\bar a}$ is given by
    \begin{align*}
        P_W(\ketbra{\xi}{\xi}\otimes\rho)
        &=\frac{1}{d}\max_{\left\{M_{\bar a}\right\}}\sum_{\bar a}\tr\left[V_{\bar a}\rho V_{\bar a}^{\dagger}M_{\bar a}\right]
        \\&=\frac{1+R(\rho)}{d}.
    \end{align*}

    Now consider the pure initial state $\ket{\mu}$ in Eq.~(\ref{pure initial state}).
    For each $\bar a$, let $\ket{\mu_{\bar a}}=W_{\bar a}\ket{\mu}$.
    Then
    \begin{align*}
        \ket{\mu_{\bar{a}}}=\sum_x (-1)^{{\bar{a}} \cdot x} \left( \alpha C_{0,x} \ket{+}  + \beta C_{1,x} \ket{-} \right) \ket{x}.
    \end{align*}
    Define the coefficient $C_x''=\sqrt{\abs{\alpha}^2 \abs{C_{0,x}}^2 +\abs{\beta}^2 \abs{C_{1,x}}^2}$ and  $\ket{\psi_x''}=(1/C_x'')(\alpha C_{0,x}\ket{+}+\beta C_{1,x}\ket{-})\ket{x}$ for $C_x''\neq0$.
    If $C_x''=0$, the vector $\ket{\psi_x''}$ can be chosen arbitrarily, since this component does not contribute to $\ket{\mu}$.
    Then
    \begin{align*}
    \ket{\mu_{\bar{a}}}=\sum_x (-1)^{{\bar{a}} \cdot x} C_x'' \ket{\psi_x''}\, ,
    \quad
    \ket{\mu}=\sum_x C_x''\ket{\psi_x''}.
    \end{align*}
    Hence, on the subspace $\mathcal{H}=\operatorname{span}{\ket{\psi_x''}}_x$, the action of $W_{\bar a}$ on $\ket{\mu}$ is equivalent to the action of
    \begin{align*}
        W_{\bar a}''=\sum_x(-1)^{\bar a\cdot x}\ketbra{\psi_x''}{\psi_x''}.
    \end{align*}
    Since all states $\ket{\mu_{\bar a}}$ lie in $\mathcal{H}$, it is sufficient to optimize the measurement on this subspace.
    Any POVM on $\mathcal{H}$ can be extended to the full Hilbert space without changing the success probability.
    Thus we may regard ${\ket{\psi_x''}}_x$ as the incoherent basis of the relevant subspace and denote by $R''$ the robustness of coherence with respect to this basis.

    We now show that
    \begin{align}
    \label{eq:PW Rpp}
        P_W(\ket{\mu})=\frac{1+R''(\ketbra{\mu}{\mu})}{d}.
    \end{align}
    Let $X$ be an optimal operator for the SDP in Eq.~(\ref{eq:dual robustness}) and define 
    \begin{align*}
        M_a''=\frac{1}{d}W_a''XW_a''^\dagger.
    \end{align*}
    Since $X$ is positive, so is $M_a''$.
    Moreover, using
    \begin{align*}
        I=E(X)
        =\sum_x \bra{\psi_x''}X\ket{\psi_x''}\ketbra{\psi_x''}{\psi_x''}
    \end{align*}
    together with $\bra{\psi_x''} X \ket{\psi_x''}=1$, we have
    \begin{align*}
        \sum_a M_{\bar{a}}'' 
        =\frac{1}{d} \sum_{\bar{a}} W_a'' X W_a''^\dagger
        = \sum_x \ketbra{\psi_x''}{\psi_x''} =I.
    \end{align*}
    Hence we obtain that $\{M_{\bar a}''\}$ is a POVM.
    It follows that
    \begin{align}
        \frac{1+R''\left(\ketbra{\mu}{\mu}\right)}{d}
        \nonumber
        &= \frac{1}{d} \sum_a \tr [W_a'' \ketbra{\mu}{\mu}W_a''M_a'']
        \\&\leq \max_{\left\{ M_a \right\}} \frac{1}{d} \sum_a \tr [W_a'' \ketbra{\mu}{\mu}W_a''M_a]
        \nonumber
        \\&=P_W(\ket{\mu}).
    \label{eq:lower bound}
    \end{align}

    On the other hand, let ${M_{\bar a}}$ be an arbitrary POVM on $\mathcal{H}$.
    By the definition of robustness, there exists a state $\tau$ and an incoherent state $\sigma$ with respect to the basis $\left\{\ket{\psi_x''}\right\}_x$ such that
    \begin{align*}
        \ketbra{\mu}{\mu}=\left(1+R''(\ketbra{\mu}{\mu})\right)\sigma-R''\left(\ketbra{\mu}{\mu}\right)\tau.
    \end{align*}
    Since $\sigma$ is incoherent in the basis $\{\ket{\psi_x''}\}_x$, it is invariant under every $W_{\bar a}''$, that is, $W_{\bar a}''\sigma W_{\bar a}''^{\dagger}=\sigma$. 
    Therefore,
    \begin{align*}
        \sum_{\bar{a}} \tr [W_a'' \ketbra{\mu}{\mu}W_a''^\dagger M_a]
        \leq \left( 1+R''\left(\ketbra{\mu}{\mu}\right)\right).
    \end{align*}
    Dividing both sides by $d$ and maximizing over all POVMs gives
    \begin{align}
    \label{eq:upper bound}
        P_W(\ket{\mu})\leq\frac{1+R''(\ketbra{\mu}{\mu})}{d}.
    \end{align}
    Combining Eqs.~(\ref{eq:lower bound}) and~(\ref{eq:upper bound}), we obtain Eq.~(\ref{eq:PW Rpp}).

    Since $\ket{\mu}=\sum_x C_x''\ket{\psi_x''}$ is a pure state, the robustness of coherence with respect to the basis ${\ket{\psi_x''}}$ is given by
    \begin{align*}
        R''\left(\ketbra{\mu}{\mu}\right)=\left( \sum_x \abs{C_x''} \right)^2-1
    \end{align*}    
    By the definition of $C_x''$, we have
    \begin{align*}
        1+R''&(\ketbra{\mu}{\mu})=\left(\sum_x\sqrt{\abs{\alpha}^2\abs{C_{0,x}}^2+\abs{\beta}^2\abs{C_{1,x}}^2}\right)^2,
   \end{align*}
   which is equal to 
   \begin{align*}  
\sum_x &\left(\abs{\alpha}^2\abs{C_{0,x}}^2+\abs{\beta}^2\abs{C_{1,x}}^2\right)\\      &+2\sum_{x<y}\left(
        \prod_{z\in\{x,y\}} \sqrt{\abs{\alpha}^2\abs{C_{0,z}}^2+\abs{\beta}^2\abs{C_{1,z}}^2}\right).
    \end{align*}

    The first term is equal to one because $\ket{\phi_0}$ and $\ket{\phi_1}$ are normalized and $\abs{\alpha}^2+\abs{\beta}^2=1$.
    To relate the remaining terms to the robustness of coherence of $\ket{\phi_0}$ and $\ket{\phi_1}$, we add and subtract the corresponding cross terms:
    \begin{align*}
        1+&R''\left(\ketbra{\mu}{\mu}\right) \nonumber
        \\ =&1+2\abs{\alpha}^2\sum_{x<y}\abs{C_{0,x}}\abs{C_{0,y}}+2\abs{\beta}^2\sum_{x<y}\abs{C_{1,x}}\abs{C_{1,y}}
        \\
        &+2\sum_{x<y}\Biggl[
        \,\prod_{z\in\{x,y\}}
        \sqrt{\abs{\alpha}^2\abs{C_{0,z}}^2+\abs{\beta}^2\abs{C_{1,z}}^2}
        \\&\mspace{55mu}
        -\left(\abs{\alpha}^2\abs{C_{0,x}}\abs{C_{0,y}}+\abs{\beta}^2\abs{C_{1,x}}\abs{C_{1,y}}\right) 
        \Biggr].
    \end{align*}
    For a pure state $\ket{\phi_j}=\sum_x C_{j,x}\ket{x}$, the robustness of coherence is
    \begin{align*}
        R\left(\ketbra{\phi_j}{\phi_j}\right)=2\sum_{x<y}\abs{C_{j,x}}\abs{C_{j,y}},
        \quad j=0,1.
    \end{align*}
    Therefore,
    \begin{align*}
        &1+R''\left(\ketbra{\mu}{\mu}\right)
        =1+\abs{\alpha}^2R\left(\ketbra{\phi_0}{\phi_0}\right)
        \\&\mspace{140mu}+\abs{\beta}^2R\left(\ketbra{\phi_1}{\phi_1}\right)+D,
    \end{align*}
    where
    \begin{align*}
        D
        &= 2\sum_{x<y}\Biggl[
        \,\prod_{z\in\{x,y\}}
        \sqrt{\abs{\alpha}^2\abs{C_{0,z}}^2+\abs{\beta}^2\abs{C_{1,z}}^2}
        \\&\qquad\qquad
        -\left(\abs{\alpha}^2\abs{C_{0,x}}\abs{C_{0,y}}+\abs{\beta}^2\abs{C_{1,x}}\abs{C_{1,y}}\right) 
        \Biggr].
    \end{align*}
    Here, the nonnegativity of $D$ follows directly from the Cauchy-Schwarz inequality.
    Finally, substituting this expression into Eq.~(\ref{eq:PW Rpp}), we obtain
    \begin{align*}
        P_W(\ket{\mu})
        =& \frac{1}{2^N} \left[1
        +\abs{\alpha}^2R(\ketbra{\phi_0}{\phi_0})
        +\abs{\beta}^2R(\ketbra{\phi_1}{\phi_1})\right]\\
        &+\frac{D}{2^N},
    \end{align*}  
\end{proof}

Theorem~\ref{Thm:performance} provides an explicit expression for the performance of the probabilistic IF-BV algorithm in terms of the structure of the initial state.
The formula contains contributions associated with the robustness of coherence of $\ket{\phi_0}$ and $\ket{\phi_1}$, together with an additional term $D$ that depends on both components simultaneously.
This indicates that the performance of the probabilistic IF-BV algorithm depends on the structure of the initial state.

\section{States with maximal performance}
Having obtained an explicit performance formula of the probabilistic IF-BV algorithm, we now investigate the conditions under which the probabilistic IF-BV algorithm achieves maximal performance.
While the performance formula in Theorem~\ref{Thm:performance} applies to arbitrary pure states, its structure is sufficiently complicated so that it is not immediately obvious to know under what conditions the optimal value $P_W=1$ can be attained.

We first identify some simple classes of initial states which have maximal performance, and then present a collection of examples showing that these conditions do not provide a complete characterization.
These observations naturally motivate the search for a necessary and sufficient condition for optimality, which can ultimately lead to a constructive description of all maximal-performance states.

It is worth noting that a related point also arises in the standard probabilistic BV algorithm.
Reference~\cite{Naseri22} states that the standard probabilistic BV algorithm achieves maximal performance if and only if the initial state is of the form $\ket{\mu}=\ket{-}\ket{\psi_{\max}}$ where $\ket{\psi_{\max}}$ is maximally coherent.
Our analysis supports this conclusion and clarifies the role of the coefficient ordering used in the performance formula.
In particular, once the initial state $\ket{\mu}$ is parametrized under the relabeling convention $\abs{C_{1,0}}=\max_x \abs{C_{1,x}}$, the maximal performance condition in the standard probabilistic BV algorithm can be derived rigorously.
For the detailed discussion, refer to Appendix~\ref{App:BV maximal condition}.

Even though the explicit expression obtained in Theorem~\ref{Thm:performance} involves several different contributions, there are some important classes of initial states for which maximal performance follows immediately.
In particular, when one or both of the states $\ket{\phi_0}$ and $\ket{\phi_1}$ exhibit maximal coherence, the performance formula simplifies considerably.

The following corollary provides several simple configurations of initial states that have maximal performance.
These conditions provide useful and easily verifiable sufficient criteria, which serve as a starting point for our subsequent discussion.

\begin{Cor}
\label{Cor:condition for maximal performance}
    The probabilistic IF-BV algorithm achieves maximal performance, that is, $P_W(\ket{\mu})=1$, if the initial state $\ket{\mu}$ satisfies one of the following conditions:
    \begin{itemize}
        \item $\ket{\phi_0}$ and $\ket{\phi_1}$ are both maximally coherent in the computational basis,
        \item $\alpha=0$ and $\ket{\phi_1}$ is maximally coherent in the computational basis,
        \item $\beta=0$ and $\ket{\phi_0}$ is maximally coherent in the computational basis.
    \end{itemize}
\end{Cor}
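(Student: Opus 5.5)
The plan is to work directly from Theorem~\ref{Thm:performance}, in the intermediate form established in its proof. That proof shows
\begin{align*}
1+R''\left(\ketbra{\mu}{\mu}\right)=\left(\sum_x C_x''\right)^2, \qquad C_x''=\sqrt{\abs{\alpha}^2\abs{C_{0,x}}^2+\abs{\beta}^2\abs{C_{1,x}}^2},
\end{align*}
and that $P_W(\ket{\mu})=(\sum_x C_x'')^2/2^N$. So it suffices in each of the three cases to show $\sum_x C_x''=\sqrt{2^N}$. This is the largest possible value, because Cauchy--Schwarz gives $(\sum_x C_x'')^2\le 2^N\sum_x (C_x'')^2=2^N$. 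This route avoids handling the term $D$ separately.

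First case: $\ket{\phi_0}$ and $\ket{\phi_1}$ are both maximally coherent, so $\abs{C_{0,x}}=\abs{C_{1,x}}=2^{-N/2}$ for all $x$. Then $C_x''=\sqrt{(\abs{\alpha}^2+\abs{\beta}^2)2^{-N}}=2^{-N/2}$. Hence $\sum_x C_x''=2^{N/2}$ and $P_W=1$.

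Second and third cases: if $\alpha=0$, normalization forces $\abs{\beta}=1$, so $C_x''=\abs{C_{1,x}}=2^{-N/2}$ and the same conclusion follows. The case $\beta=0$ is symmetric, with $C_x''=\abs{C_{0,x}}$.

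As a cross-check against the displayed formula of Theorem~\ref{Thm:performance}, note that in the first case each bracket in $D$ is $2^{-N}-(\abs{\alpha}^2+\abs{\beta}^2)2^{-N}=0$, so $D=0$. Then $P_W=[1+(2^N-1)]/2^N=1$. In the degenerate cases $D$ likewise vanishes, since each product reduces to $\abs{C_{j,x}}\abs{C_{j,y}}$.

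There is no real obstacle here. The only point needing care is the phrase ``maximal performance'': we should note that $P_W\le 1$ trivially, because it is an average of probabilities, so attaining $1$ is indeed maximal.
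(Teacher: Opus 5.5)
Your proposal is correct and matches the paper's approach. The paper gives no separate proof and treats this corollary as an immediate substitution into the performance formula of Theorem~\ref{Thm:performance}. Your computation that $C_x''=2^{-N/2}$ for every $x$ in all three cases, together with the cross-check that $D=0$ and the robustness terms equal $2^N-1$, is exactly that substitution, and your remark that $P_W\le 1$ correctly justifies calling the value $1$ maximal.
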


Although the conditions listed in Corollary~\ref{Cor:condition for maximal performance} establish several simple ways of obtaining maximal performance, they do not completely characterize the optimal states.
As we now demonstrate, there exist many initial states that lie outside the classes described in the corollary and nevertheless achieve maximal performance.
The following examples illustrate that the relationship between coherence and performance is substantially more subtle than one might initially expect.

\begin{Ex}
\label{Ex:Examples}
    The conditions given in Corollary~\ref{Cor:condition for maximal performance} are sufficient for achieving maximal performance, but they are not necessary.
    In fact, there exist a variety of initial states that do not satisfy any of the conditions in Corollary 3 and yet still achieve $P_W(\ket{\mu})=1$.
    We present several examples illustrating this point.

    First, consider the case $N=2$ and choose
    \begin{gather*}
        \abs{\alpha}^2=\frac{1}{4}, \qquad
        \abs{\beta}^2=\frac{3}{4}, \\
        \ket{\phi_0}=
        \frac{1}{\sqrt{10}}\ket{00}
        +\sqrt{\frac{1}{5}}\ket{01}
        +\sqrt{\frac{3}{10}}\ket{10}
        +\sqrt{\frac{2}{5}}\ket{11}, \\
        \ket{\phi_1}=
        \sqrt{\frac{3}{10}}\ket{00}
        +\sqrt{\frac{4}{15}}\ket{01}
        +\sqrt{\frac{7}{30}}\ket{10}
        +\sqrt{\frac{1}{5}}\ket{11}.
    \end{gather*}
    Then neither $\ket{\phi_0}$ nor $\ket{\phi_1}$ is incoherent or maximally coherent.
    Nevertheless, the corresponding state $\ket{\mu}$ achieves $P_W(\ket{\mu})=1$.
    This shows that the converse of Corollary~\ref{Cor:condition for maximal performance} does not hold.

    Second, let
    \begin{gather*}
        \abs{\alpha}^2=\frac{1}{8}, \quad
        \abs{\beta}^2=\frac{7}{8}, \\
        \ket{\phi_0}=\ket{11}, \\
        \ket{\phi_1}=
        \sqrt{\frac{2}{7}}\ket{00}
        +\sqrt{\frac{2}{7}}\ket{01}
        +\sqrt{\frac{2}{7}}\ket{10}
        +\frac{1}{\sqrt{7}}\ket{11}.
    \end{gather*}
    In this case, $\ket{\phi_0}$ is incoherent, while $\ket{\phi_1}$ is neither incoherent nor maximally coherent.
    However, the corresponding state $\ket{\mu}$ still achieves $P_W(\ket{\mu})=1$.
    This demonstrates that maximal performance can be achieved even when one of the states $\ket{\phi_0}$ and $\ket{\phi_1}$ is completely incoherent.

    Finally, consider the case $N=1$, and choose
    \begin{gather*}
        \abs{\alpha}^2=\abs{\beta}^2=\frac{1}{2}, \quad
        \ket{\phi_0}=\ket{1}, \quad
        \ket{\phi_1}=\ket{0}.
    \end{gather*}
    Then, while $P_W(\ket{\mu})=1$, both $\ket{\phi_0}$ and $\ket{\phi_1}$ are incoherent.
\end{Ex}

The examples above reveal that maximal performance can occur in a surprisingly broad range of situations.
In particular, the final example shows that when $N=1$, the probabilistic IF-BV algorithm may achieve perfect performance even though both $\ket{\phi_0}$ and $\ket{\phi_1}$ are completely incoherent in the computational basis.
We formalize this observation in the following corollary. 

\begin{Cor}
\label{Cor:performance for incoherent state}
    Assume that both $\ket{\phi_0}$ and $\ket{\phi_1}$ are incoherent in the computational basis. 
    If $N\geq2$, then the probabilistic IF-BV algorithm cannot achieve maximal performance, i.e., $P_W \left( \ket{\mu} \right) < 1$.
\end{Cor}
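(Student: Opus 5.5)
Our plan is to use the intermediate identity from the proof of Theorem~\ref{Thm:performance},
\begin{align*}
P_W(\ket{\mu})=\frac{1}{2^N}\left(\sum_x \sqrt{\abs{\alpha}^2\abs{C_{0,x}}^2+\abs{\beta}^2\abs{C_{1,x}}^2}\right)^2 ,
\end{align*}
rather than the expanded form with $R$ and $D$. This reduces the question to counting how many terms $C_x''$ in the sum can be nonzero.

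If $\ket{\phi_0}$ and $\ket{\phi_1}$ are incoherent, then up to irrelevant phases we have $\ket{\phi_0}=\ket{x_0}$ and $\ket{\phi_1}=\ket{x_1}$ for some indices $x_0,x_1$. Hence $C_{0,x}=\delta_{x,x_0}$ and $C_{1,x}=\delta_{x,x_1}$, so the vector $(C_x'')_x$ has at most two nonzero entries. We then split into two cases.

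In the first case, $x_0=x_1$. Then $C''_{x_0}=\sqrt{\abs{\alpha}^2+\abs{\beta}^2}=1$ and all other entries vanish, so $P_W=1/2^N$.

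In the second case, $x_0\neq x_1$. Then $C''_{x_0}=\abs{\alpha}$ and $C''_{x_1}=\abs{\beta}$, so
\begin{align*}
P_W=\frac{(\abs{\alpha}+\abs{\beta})^2}{2^N}\le \frac{2(\abs{\alpha}^2+\abs{\beta}^2)}{2^N}=\frac{2}{2^N}
\end{align*}
by Cauchy--Schwarz. Equality holds exactly when $\abs{\alpha}=\abs{\beta}=1/\sqrt2$.

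In both cases $P_W\le 2/2^N$. For $N\ge2$ this gives $P_W\le 1/2<1$, which proves the claim. The same bound also explains why $N=1$ is special: there $2/2^N=1$, and the third item of Example~\ref{Ex:Examples} attains it.

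The only point needing care is the justification of the compact formula $P_W=(\sum_x C_x'')^2/2^N$. This is Eq.~(\ref{eq:PW Rpp}) combined with the pure-state expression for $R''$, both established in the proof of Theorem~\ref{Thm:performance}, so the step is routine. Alternatively, one could substitute directly into the theorem's formula: $R(\ketbra{\phi_j}{\phi_j})=0$, and $D$ reduces to the single pair term $2\abs{\alpha}\abs{\beta}$, giving the same bound $(1+2\abs{\alpha}\abs{\beta})/2^N\le 2/2^N$.
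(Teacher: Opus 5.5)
Your proposal is correct and follows the paper's proof essentially step for step. Like the paper, you use the compact form $P_W(\ket{\mu})=2^{-N}\bigl(\sum_x C_x''\bigr)^2$, split into the cases $x_0=x_1$ (giving $1/2^N$) and $x_0\neq x_1$ (giving $(\abs{\alpha}+\abs{\beta})^2/2^N\le 2/2^N$), and conclude $P_W\le 1/2^{N-1}<1$ for $N\ge 2$; your alternative check via the expanded formula of Theorem~\ref{Thm:performance} is also fine, noting only that $D=2\abs{\alpha}\abs{\beta}$ holds in the case $x_0\neq x_1$ while $D=0$ when $x_0=x_1$.
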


\begin{proof}
    Since $\ket{\phi_0}$ and $\ket{\phi_1}$ are incoherent in the computational basis, there exist computational basis states $\ket{x_0}$ and $\ket{x_1}$ such that $\ket{\phi_0}=\ket{x_0}$, $\ket{\phi_1}=\ket{x_1}$.
    Thus, $\abs{C_{0,x}}^2=\delta_{x,x_0}$ and $\abs{C_{1,x}}^2=\delta_{x,x_1}$.
    Using the following expression in the proof of Theorem 2,
    \begin{align*}
        P_W(\ket{\mu})=\frac{1}{2^N} \left( \sum_x\sqrt{\abs{\alpha}^2 \abs{C_{0,x}}^2+
    \abs{\beta}^2 \abs{C_{1,x}}^2} \right)^2,
    \end{align*}
    we see that the sum contains nonzero contributions from at most two computational basis states.
    
    If $x_0=x_1$, then
    \begin{align*}
        \sum_x\sqrt{\abs{\alpha}^2 \abs{C_{0,x}}^2+
        \abs{\beta}^2 \abs{C_{1,x}}^2}  =1,
    \end{align*}
    and hence $P_W(\ket{\mu})=\frac{1}{2^N}<1$.

    If $x_0\neq x_1$, then
    \begin{align*}
        \sum_x\sqrt{\abs{\alpha}^2 \abs{C_{0,x}}^2+
        \abs{\beta}^2 \abs{C_{1,x}}^2} =\abs{\alpha}+\abs{\beta}.
    \end{align*}
    Since $\abs{\alpha}^2+\abs{\beta}^2=1$, we have $\abs{\alpha}+\abs{\beta} \leq \sqrt{2}$.
    Therefore, $P_W(\ket{\mu}) \leq 2/2^N$.
    For $N\geq 2$, this implies
    \begin{align*}
        P_W(\ket{\mu})\leq \frac{1}{2^{N-1}} <1.
    \end{align*}
    Hence $P_W(\ket{\mu})< 1$ for all $N\geq 2$.
\end{proof}

The previous examples and corollary indicate that maximal performance cannot be characterized solely in terms of maximal coherence or incoherence of the states $\ket{\phi_0}$ and $\ket{\phi_1}$ in a given initial state in Eq.~(\ref{pure initial state}).
While maximal coherence certainly provides a sufficient condition, it is by no means necessary, and even highly asymmetric configurations may achieve perfect performance.

These observations suggest that the essential feature determining maximal performance must lie elsewhere.
Rather than relying on qualitative properties such as coherence or incoherence, one should seek a quantitative condition directly relating the components of the initial state.
The following theorem provides a necessary and sufficient condition for achieving $P_W \left( \ket{\mu} \right)=1 $.

\begin{Thm}
\label{Thm:maximal performance}
    A pure initial state $\ket{\mu}$ of the form in Eq.~(\ref{pure initial state}) has maximal performance of the probabilistic IF-BV algorithm, that is,
    \begin{align*}
        P_W \left( \ket{\mu} \right)=1  
    \end{align*}
    if and only if
    \begin{align*}
        \abs{\alpha}^2 \abs{C_{0,x}}^2+\abs{\beta}^2\abs{C_{1,x}}^2 =\frac{1}{2^N}
    \end{align*}
    for all $x$.
\end{Thm}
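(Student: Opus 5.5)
The plan is to work not with the decomposed formula in Theorem~\ref{Thm:performance} but with the compact form from its proof, Eq.~(\ref{eq:PW Rpp}), which gives
\begin{align*}
    P_W(\ket{\mu})=\frac{1}{2^N}\left(\sum_x C_x''\right)^2,
    \qquad
    C_x''=\sqrt{\abs{\alpha}^2\abs{C_{0,x}}^2+\abs{\beta}^2\abs{C_{1,x}}^2}.
\end{align*}
Everything then reduces to a statement about the nonnegative vector $(C_x'')_x$ of length $d=2^N$.

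The first step is to record its normalization. We have
\begin{align*}
    \sum_x (C_x'')^2=\abs{\alpha}^2\sum_x\abs{C_{0,x}}^2+\abs{\beta}^2\sum_x\abs{C_{1,x}}^2=\abs{\alpha}^2+\abs{\beta}^2=1,
\end{align*}
using the normalization of $\ket{\phi_0}$, $\ket{\phi_1}$ and of $\ket{\mu}$. This is the same identity the proof of Theorem~\ref{Thm:performance} used to show that the diagonal term equals one.

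The second step applies the Cauchy--Schwarz inequality to the vectors $(C_x'')_x$ and $(1,\dots,1)$:
\begin{align*}
    \left(\sum_x C_x''\right)^2\le d\sum_x (C_x'')^2=d.
\end{align*}
This gives $P_W(\ket{\mu})\le 1$. Equality holds if and only if $(C_x'')_x$ is proportional to the all-ones vector. Since all entries are nonnegative and their squares sum to one, proportionality means $C_x''=1/\sqrt{d}$ for every $x$, which is exactly $(C_x'')^2=1/2^N$, the stated condition.

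Both directions are then immediate. If the condition holds, then $\sum_x C_x''=d\cdot d^{-1/2}=\sqrt{d}$, so $P_W=1$. Conversely, $P_W=1$ forces equality in Cauchy--Schwarz, which forces the condition.

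The only delicate point is making sure Eq.~(\ref{eq:PW Rpp}) applies in every case. When some $C_x''=0$, the vector $\ket{\psi_x''}$ is chosen arbitrarily, but that term contributes $0$ to the sum, so the formula is unaffected. The equality analysis above already shows that this case is excluded whenever $P_W=1$. I expect no real obstacle: the main work is already done in Theorem~\ref{Thm:performance}, and this result is its equality case.
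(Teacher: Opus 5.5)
Your proposal is correct and follows essentially the same route as the paper: it writes $P_W(\ket{\mu})=\frac{1}{2^N}\left(\sum_x S_x\right)^2$ with $S_x=C_x''$, uses $\sum_x S_x^2=1$, and applies Cauchy--Schwarz against the all-ones vector, where the equality case forces $S_x=2^{-N/2}$ for every $x$. Your added remark about components with $C_x''=0$ goes slightly beyond the paper, which leaves that case implicit, but it changes nothing in the argument.
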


\begin{proof}
    We recall from the proof of Theorem 2 that the performance is given by
    \begin{align*}
        P_W(\ket{\mu})=\frac{1}{2^N} \left[ \sum_x \sqrt{\abs{\alpha}^2 \abs{C_{0,x}}^2+\abs{\beta }^2 \abs{C_{1,x} }^2}\right]^2.
    \end{align*}
    Let us define 
    \begin{align*}
        S_x = \sqrt{\abs{\alpha}^2 \abs{C_{0,x}}^2+\abs{\beta }^2 \abs{C_{1,x} }^2}.
    \end{align*} 
    Using this notation, the performance can be written as $ P_W\left(\ket{\mu}\right)=\left( \sum_x S_x \right)^2 /{2^N} $.
    
    On the other hand, we observe that
    \begin{align*}
        \sum_x S_x^2
        &=
        \sum_x \left( \abs{\alpha}^2 \abs{C_{0,x}}^2+\abs{\beta}^2\abs{C_{1,x}}^2 \right)
        \\&=
        \abs{\alpha}^2 \sum_x \abs{C_{0,x}}^2+\abs{\beta}^2\sum_x \abs{C_{1,x}}^2
        \\&=1.
    \end{align*}

    By applying the Cauchy-Schwarz inequality to the vectors $(1,1,\ldots,1)$ and $(S_x)_x$, and using $\sum_x S_x^2=1$, we find $\left(\sum_x S_x\right)^2\le 2^N$.
    Substituting this into the expression for $P_W(\ket{\mu})$, we get
    \begin{align*}
        P_W(\ket{\mu})
        =
        \frac{1}{2^N} \left( \sum_x S_x \right)^2
        \leq
        \frac{1}{2^N} \cdot 2^N
        =
        1.
    \end{align*}
    Hence, the maximum value of $P_W(\ket{\mu})$ is equal to $1$.
    
    We now analyze the equality condition.
    The equality in the Cauchy-Schwarz inequality holds if and only if the two vectors are linearly dependent, which in this case means that all $S_x$ must be equal. Therefore, for all $x$,
    \begin{align*}
        S_x=\frac{1}{\sqrt{2^N}}.
    \end{align*}
    Squaring both sides, we obtain
    \begin{align*}
        \abs{\alpha}^2 \abs{C_{0,x}}^2+\abs{\beta }^2 \abs{C_{1,x} }^2=\frac{1}{2^N}
    \end{align*}
    for all $x$.
    Conversely, if this condition holds for all $x$, then all $S_x$ are equal, and the Cauchy-Schwarz inequality becomes an equality.
\end{proof}

Theorem~\ref{Thm:maximal performance} completely characterizes the initial states that achieve maximal performance.
Beyond its theoretical significance, this characterization also has a practical consequence, it provides a direct method for constructing such states.

Rather than searching for optimal states on a case-by-case basis, one may systematically generate them by enforcing the condition established in Theorem~\ref{Thm:maximal performance}.
The following remark describes one possible construction procedure.

\begin{Rem}[Methods for constructing an initial state that maximizes performance]
\label{Rem:Const-maximal performance}
    We now present a general method for constructing a pure initial state that achieves maximal performance in the probabilistic IF-BV algorithm.
    For general $N \in \mathbb{N}$ and an arbitrary real number $\abs{\alpha}^2 \in (0,1)$, choose a random probability distribution $\{ p_x \}$ where $p_x \leq 1 / \left( \abs{\alpha}^2 2^N\right)$ for all $x$.
    Then for
    \begin{align*}
        q_x=\frac{\left( 1/{2^N} -\abs{\alpha}^2 p_x \right)}{1-\abs{\alpha}^2} \geq 0,
    \end{align*}
    $\{ q_x \}$ is also a probability distribution.
    Following that, $\abs{\alpha}^2 p_x+\left( 1-\abs{\alpha}^2 \right) q_x = 1/2^N$ holds for all $x$.
    Now, for each $x$, select complex numbers $C_{0,x}$ and $C_{1,x}$ satisfying $\abs{C_{0,x}}^2=p_x$ and $\abs{C_{1,x}}^2=q_x$.
    Finally, for $\ket{\mu}=\alpha\ket{+}\ket{\phi_0}+\beta\ket{-}\ket{\phi_1}$ as in Eq.~(\ref{pure initial state}), the performance of the probabilistic IF-BV algorithm $P_W(\ket{\mu})$ has maximum value.
\end{Rem}

This construction shows that the class of initial states with maximal performance is remarkably broad.
In particular, maximal performance is determined not by maximal coherence, but by a balancing relation between the two states $\ket{\phi_0}$ and $\ket{\phi_1}$ in Eq.~(\ref{pure initial state}).
This highlights the richer and more flexible performance structure of the probabilistic IF-BV algorithm compared to the standard probabilistic BV algorithm.

\section{Comparison with The Standard Probabilistic BV Algorithm}
We next compare the probabilistic IF-BV algorithm with the standard probabilistic BV algorithm, focusing on how the modified oracle affects the optimal success probability for a given pure initial state.
Since the IF-BV algorithm succeeds for a broader class of initial states than the standard BV algorithm, it is natural to ask whether a similar advantage persists in the probabilistic setting.

More specifically, one may wonder whether the additional flexibility provided by the modified oracle can also lead to an improvement in the optimal average success probability. The comparison below is stated under an explicit ordering assumption on the coefficients of the initial state. Namely, we assume that the distinguished computational basis element $x=0$, which appears separately in the performance formula for the standard probabilistic BV algorithm, corresponds to a coefficient of $\ket{\phi_1}$ with maximal modulus. Under this assumption, the following theorem establishes that the probabilistic IF-BV algorithm outperforms the standard probabilistic BV algorithm.

\begin{Thm}
\label{Thm:PW>=P}
    Assume that the pure initial state $\ket{\mu}$ is of the form $\ket{\mu}=\alpha\ket{+}\ket{\phi_0}+\beta\ket{-}\ket{\phi_1}$, where $\ket{\phi_i}=\sum_x C_{i,x}\ket{x}$.
    We relabel the computational basis, if necessary, so that $\abs{C_{1,0}}=\max_x \abs{C_{1,x}}$.    
    Let $P(\ket{\mu})$ be the performance of the probabilistic BV algorithm for a pure initial state $\ket{\mu}$, that is, 
    \[
     P(\rho)=\frac{1}{2^N}\max_{\{M_{\bar{a}}\}}\sum_{\bar{a}}\tr \left[ U_{\bar{a}}\rho U_{\bar{a}}^\dagger M_{\bar{a}} \right] 
    \]
    for an initial state $\rho$.
    Then, under the given relabeling convention, $P_W(\ket{\mu}) \geq P(\ket{\mu}) $ for any pure initial state $\ket{\mu}$.
\end{Thm}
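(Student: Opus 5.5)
The plan is to reduce both performances to the same kind of closed form and then compare two sums. For $P_W$ I will use the expression from the proof of Theorem~\ref{Thm:performance}:
\[
P_W(\ket{\mu})=\frac{1}{2^N}\Bigl(\sum_x S_x\Bigr)^2,\qquad S_x=\sqrt{\abs{\alpha}^2\abs{C_{0,x}}^2+\abs{\beta}^2\abs{C_{1,x}}^2}.
\]
For $P$ I will derive an analogous formula by repeating the argument of Theorem~\ref{Thm:performance} for $U_{\bar a}$. Since $U_{\bar a}$ leaves $\ket{+}\ket{x}$ invariant and maps $\ket{-}\ket{x}\mapsto(-1)^{\bar a\cdot\bar x}\ket{-}\ket{x}$, we get
\[
U_{\bar a}\ket{\mu}=\alpha\ket{+}\ket{\phi_0}+\sum_x(-1)^{\bar a\cdot \bar x}\beta C_{1,x}\ket{-}\ket{x}.
\]
The phase attached to $x=0$ is always $+1$, so the untouched component $\alpha\ket{+}\ket{\phi_0}$ can be merged with the $x=0$ term. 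Define $\ket{\chi_0}\propto\alpha\ket{+}\ket{\phi_0}+\beta C_{1,0}\ket{-}\ket{0}$, with norm $T_0=\sqrt{\abs{\alpha}^2+\abs{\beta}^2\abs{C_{1,0}}^2}$, and $\ket{\chi_x}=\ket{-}\ket{x}$ for $x\neq0$. These vectors are orthonormal. The same SDP/robustness argument (the lower bound via $M_{\bar a}=\frac1d W''XW''^\dagger$, and the upper bound via the robustness decomposition) then gives
\[
P(\ket{\mu})=\frac{1}{2^N}\Bigl(T_0+\abs{\beta}\sum_{x\neq0}\abs{C_{1,x}}\Bigr)^2.
\]
This should agree with the formula of Ref.~\cite{Naseri22} and Appendix~\ref{App:BV maximal condition}, where $x=0$ is the distinguished index.

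It then suffices to show that $\sum_x S_x\ge T_0+\abs{\beta}\sum_{x\ne0}\abs{C_{1,x}}$. I will rewrite this as
\[
\sum_x\bigl(S_x-\abs{\beta}\abs{C_{1,x}}\bigr)\ \ge\ T_0-\abs{\beta}\abs{C_{1,0}}.
\]
Set $b=\abs{\beta}^2\abs{C_{1,0}}^2$ and $t_x=\abs{\alpha}^2\abs{C_{0,x}}^2$, so that $\sum_x t_x=\abs{\alpha}^2$. Define $f_c(t)=\sqrt{t+c}-\sqrt{c}$, which is nonincreasing in $c$ for fixed $t\ge0$. The ordering assumption $\abs{C_{1,x}}\le\abs{C_{1,0}}$ is used exactly here:
\[
S_x-\abs{\beta}\abs{C_{1,x}}=f_{\abs{\beta}^2\abs{C_{1,x}}^2}(t_x)\ \ge\ f_b(t_x)\quad\text{for every }x.
\]

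Finally, $f_b$ is concave on $[0,\infty)$ with $f_b(0)=0$, hence subadditive. Therefore
\[
\sum_x f_b(t_x)\ \ge\ f_b\Bigl(\sum_x t_x\Bigr)=\sqrt{\abs{\alpha}^2+b}-\sqrt b=T_0-\abs{\beta}\abs{C_{1,0}},
\]
which is the required inequality. Squaring the two nonnegative sums and dividing by $2^N$ yields $P_W(\ket{\mu})\ge P(\ket{\mu})$.

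I expect the main obstacle to be the first step: justifying the formula for $P$ rigorously, in particular that merging $\alpha\ket{+}\ket{\phi_0}$ into the $x=0$ basis vector is legitimate and that the robustness argument transfers verbatim, including degenerate cases such as $T_0=0$ or vanishing $C_{1,x}$. The comparison inequality itself is short once monotonicity in $c$ and subadditivity of $f_b$ are in hand.
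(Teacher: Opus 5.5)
Your proof is correct, and the comparison step follows a different, though closely related, route from the paper's.

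The paper fixes $q_x=\abs{C_{1,x}}^2$ and treats $G(\{p_x\})=\sum_x\sqrt{\abs{\alpha}^2p_x+\abs{\beta}^2q_x}$ as a concave function on the simplex. It pushes the minimum to a vertex $p=e_j$. Only then does it use the ordering, through the monotonicity of $h(t)=\sqrt{\abs{\alpha}^2+\abs{\beta}^2t^2}-\abs{\beta}t$, to single out $j=0$. The formula for $P$ is simply quoted from Ref.~\cite{Naseri22}.

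You reverse the order. You use the ordering first, termwise, to replace every offset $\abs{\beta}^2\abs{C_{1,x}}^2$ by the largest one $b$. Subadditivity of the single concave function $f_b$ then collapses the sum. Since $f_b(\abs{\alpha}^2)=h(\abs{C_{1,0}})$, the two lower bounds are identical.

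What your version buys:
\begin{itemize}
\item It is more elementary.
\item It shows exactly where the convention enters.
\item It needs no separate case $\alpha=0$.
\item It yields the equality cases of Corollary~\ref{Cor:equality} just as directly. For $t_x>0$ the map $c\mapsto f_c(t_x)=t_x/(\sqrt{t_x+c}+\sqrt{c})$ is strictly decreasing, and $f_b$ is strictly concave. So for $\alpha\beta\neq0$, equality forces $\ket{\phi_0}$ onto a single index $j$ with $\abs{C_{1,j}}=\abs{C_{1,0}}$.
\end{itemize}

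What the paper's framing buys: because $P(\ket{\mu})$ does not depend on $\ket{\phi_0}$, the vertex argument shows that $P(\ket{\mu})$ is exactly the minimum of $P_W$ over $\ket{\phi_0}$ with $\alpha,\beta,\ket{\phi_1}$ fixed, attained at $\ket{\phi_0}=\ket{0}$.

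Your own derivation of $P$ via the orthonormal family $\{\ket{\chi_x}\}$ is sound. It also makes visible why $x=0$ is distinguished: the phase $(-1)^{\bar a\cdot\bar 0}=1$ lets $\alpha\ket{+}\ket{\phi_0}$ merge with that term.

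The degenerate case you worry about cannot occur. Under the convention $\abs{C_{1,0}}^2\ge 2^{-N}$, so $T_0^2\ge 2^{-N}>0$.
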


\begin{proof}
    According to the proof of Theorem~\ref{Thm:performance}, the performance of the probabilistic IF-BV algorithm can be written as 
    \begin{align*}
        P_W(\ket{\mu})=\frac{1}{2^N} \left[ \sum_x \sqrt{\abs{\alpha}^2 \abs{C_{0,x}}^2+\abs{\beta }^2 \abs{C_{1,x} }^2}\right]^2.
    \end{align*}
    On the other hand, the performance of the probabilistic BV algorithm is given by
    \begin{align*}
        P(\ket{\mu})=\frac{1}{2^N} \left[  \sqrt{\abs{\alpha}^2 +\abs{\beta }^2 \abs{C_{1,0} }^2} + \abs{\beta } \sum_{x\neq0}\abs{C_{1,x}} \right]^2. 
    \end{align*}

    To prove that $P_W(\ket{\mu}) \geq P(\ket{\mu})$, define $a=\abs{\alpha}^2$ and $b=\abs{\beta}^2$, so that $a+b=1$, and denote $p_x=\abs{C_{0,x}}^2$ and $q_x=\abs{C_{1,x}}^2$.
    Then both $\{p_x\}$ and $\{q_x\}$ are probability distributions.
    
    Now for a given probability distribution $\{q_x\}$, consider the function 
    \begin{align*}
        G\left( \{p_x\} \right)=\sum_x \sqrt{ap_x + bq_x}.
    \end{align*}
    Then we have $P_W \left( \ket{\mu} \right) = 1/2^N G \left( \{ p_x \} \right)^2$.
    We analyze the minimum of $G \left( \{ p_x \} \right)$ over all probability distributions $\{p_x\}$. 
    Each term $\sqrt{ap_x + bq_x}$ is a concave function of $p_x$, and therefore $G$ is concave in the vector $\{p_x\}$. 
    Since the probability simplex is convex, the minimum of a concave function over this domain is attained at an extreme point.
    Hence, the minimum occurs when all the weight is concentrated on a single index, i.e., there exists $j$ such that $p_i=\delta_{i,j}$.
    Substituting this into $G$, we obtain
    \begin{align*}
        G\left( \{p_x \} \right) 
        \geq 
        \min_j \left[ \sqrt{a+bq_j} + \sum_{x \neq j} \sqrt{bq_x} \right].
    \end{align*}
    Since $\sqrt{bq_x}=\abs{\beta}\abs{C_{1,x}}$, this becomes
    \begin{align*}
        G\left( \{p_x \} \right) 
        \geq
        \min_j \left[ \sqrt{\abs{\alpha}^2+\abs{\beta}^2\abs{C_{1,j}}^2} + \abs{\beta} \sum_{x \neq j} \abs{C_{1,x}} \right].
    \end{align*}
   
   To determine which $j$ minimizes this expression, define $h(t)=\sqrt{\abs{\alpha}^2 + \abs{\beta}^2 t^2}-\abs{\beta}t$.
   If \(\alpha=0\), the inequality is immediate.
   Assume \(\alpha\neq0\).
   Then a direct computation shows that \(h'(t)<0\) for all \(t>0\).  
   By our assumption, this occurs at $j=0$.

   Therefore, we obtain
    \begin{align*}
        P_W(\ket{\mu}) 
        &\geq
        \frac{1}{2^N} \left[ \sqrt{\abs{\alpha}^2+\abs{\beta}^2\abs{C_{1,0}}^2} + \abs{\beta} \sum_{x \neq 0} \abs{C_{1,x}} \right]^2 
        \\&= 
        P(\ket{\mu}).
    \end{align*}
   
\end{proof}

This result shows that the IF-BV algorithm enlarges the class of initial states capable of achieving high performance.
In particular, the additional interaction between the terms of $\ket{\phi_0}$ and $\ket{\phi_1}$ in Eq.~(\ref{pure initial state}) enabled by the modified oracle provides a systematic performance advantage over the probabilistic BV algorithm.

It is worth emphasizing that the comparison in Theorem~\ref{Thm:PW>=P} relies on the relabeling convention imposed on the coefficients of $\ket{\phi_1}$.
Without this assumption, the inequality $P_W(\ket{\mu})\ge P(\ket{\mu})$ does not necessarily hold.
The following remark presents a simple counterexample illustrating this point.

\begin{Rem}
The convention $\abs{C_{1,0}}=\max_x \abs{C_{1,x}}$ is essential in Theorem~\ref{Thm:PW>=P}.
Without this assumption, the inequality $P_W(\ket{\mu})\ge P(\ket{\mu})$ does not necessarily hold.

As a simple counterexample, consider the case
\begin{gather*}
    N=1, \quad \abs{\alpha}^2=\abs{\beta}^2=\frac12, \\
    \ket{\phi_0}=\ket1,\quad
    \ket{\phi_1}=\frac12\ket0+\frac{\sqrt3}{2}\ket1.
\end{gather*}
Since $\abs{C_{1,0}}=1/2 <{\sqrt3}/{2}=\abs{C_{1,1}}$, the relabeling convention of Theorem~\ref{Thm:PW>=P} is violated.
A direct calculation shows that
\begin{align*}
    P_W(\ket{\mu})=\frac12\left(\sqrt{\frac18}+\sqrt{\frac78}\right)^2=\frac{4+\sqrt7}{8},
\end{align*}
whereas
\begin{align*}
    P(\ket{\mu})=\frac12\left(\sqrt{\frac58}+\sqrt{\frac38}\right)^2=\frac{4+\sqrt{15}}{8}.
\end{align*}
Hence, $P_W(\ket{\mu})<P(\ket{\mu})$.

This example shows that the ordering assumption is necessary for the comparison stated in Theorem~\ref{Thm:PW>=P}.
\end{Rem}

Although the probabilistic IF-BV algorithm outperforms its standard counterpart under the relabeling convention, there are special situations in which the two algorithms exhibit exactly the same performance.
It is therefore natural to ask under what conditions the advantage provided by the modified oracle disappears.
The following corollary gives a complete characterization of the equality case.

\begin{Cor}
\label{Cor:equality}
    Assume that the pure initial state $\ket{\mu}$ is of the form $\ket{\mu}=\alpha\ket{+}\ket{\phi_0}+\beta\ket{-}\ket{\phi_1}$, where $\ket{\phi_i}=\sum_x C_{i,x}\ket{x}$, and that the computational basis has been relabeled, if necessary, so that $\abs{C_{1,0}}=\max_x \abs{C_{1,x}}$.  
    Then the equality $P_W\left(\ket{\mu}\right)=P\left(\ket{\mu}\right)$ holds if and only if one of the following conditions is satisfied:
    \begin{itemize}
        \item $\alpha=0$,
        \item $\beta=0$ and $\ket{\phi_0}$ is incoherent in the computational basis,
        \item $\alpha\beta\neq0$ and $\ket{\phi_0}$ is supported on a single computational basis state $\ket{j}$ satisfying $\abs{C_{1,j}}=\abs{C_{1,0}}$.
    \end{itemize}    
    In particular, when $\alpha\beta\neq0$ and the maximum value $\abs{C_{1,0}}=\max_x \abs{C_{1,x}}$ is unique, the equality holds if and only if $\ket{\phi_0}=\ket{0}$.
\end{Cor}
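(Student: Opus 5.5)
The plan is to work with the two closed forms from the proof of Theorem~\ref{Thm:PW>=P}. Write $a=\abs{\alpha}^2$, $b=\abs{\beta}^2$, $p_x=\abs{C_{0,x}}^2$, $q_x=\abs{C_{1,x}}^2$. Then equality $P_W=P$ is equivalent to
\begin{align*}
G(\{p_x\})=\sum_x\sqrt{ap_x+bq_x}=\sqrt{a+bq_0}+\sqrt{b}\sum_{x\neq0}\sqrt{q_x},
\end{align*}
since both sides are nonnegative. The proof of Theorem~\ref{Thm:PW>=P} gives the chain
\begin{align*}
G(\{p_x\})\ \ge\ \min_j F_j\ =\ F_0,
\qquad F_j=\sqrt{a+bq_j}+\sqrt b\sum_{x\ne j}\sqrt{q_x}.
\end{align*}
Equality therefore holds if and only if both inequalities in this chain are tight. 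I will characterize each tightness condition separately and then intersect them.

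First I dispose of the degenerate cases.
\begin{itemize}
\item If $\alpha=0$, then $a=0$ and both sides equal $\sum_x\sqrt{q_x}$, so equality always holds.
\item If $\beta=0$, then $b=0$, $a=1$, and the right side is $1$ while $G=\sum_x\sqrt{p_x}$. Since $\sum_x p_x=1$, the sum $\sum_x\sqrt{p_x}$ is at least $1$, with equality exactly when a single $p_x$ is nonzero, i.e.\ $\ket{\phi_0}$ is incoherent.
\end{itemize}

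Now assume $ab\neq0$. For the second link, note that $F_j-F_0=h(\sqrt{q_j})-h(\sqrt{q_0})$, with $h$ as in the proof of Theorem~\ref{Thm:PW>=P}. I will check that $h$ is strictly decreasing on $[0,\infty)$ when $a>0$. Indeed, $h'(t)=bt/\sqrt{a+bt^2}-\sqrt b<0$, and at $t=0$ the expression is $-\sqrt b<0$. Hence $F_j=F_0$ if and only if $q_j=q_0$, i.e.\ $\abs{C_{1,j}}=\abs{C_{1,0}}$.

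The main obstacle is the first link: I need the minimizers of the concave function $G$ over the simplex to be exactly the vertices. Concavity alone only says some vertex attains the minimum, so I need strict concavity along the relevant segments. The map $p\mapsto\sqrt{ap+bq}$ is strictly concave on $[0,\infty)$ when $a>0$; this holds even when $q=0$, because $\sqrt{ap}$ is strictly concave. Then $G$ is a sum of strictly concave functions of distinct coordinates, so it is strictly concave on the simplex. Suppose $\{p_x\}$ is not a vertex. Write it as a nontrivial convex combination $\sum_j p_j e_j$ of vertices. Strict concavity along segments (applied inductively, or directly via Jensen with strict concavity) gives $G(\{p_x\})>\sum_j p_jF_j\ge\min_jF_j$. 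So $G=\min_jF_j$ forces $p=e_j$ for some $j$ with $F_j=\min F=F_0$. Conversely, $p=e_j$ with $q_j=q_0$ gives $G=F_j=F_0$.

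Combining the two links, for $ab\neq0$ equality holds if and only if $\ket{\phi_0}$ is supported on a single $\ket{j}$ with $\abs{C_{1,j}}=\abs{C_{1,0}}$. In the unique-maximum case the only such $j$ is $0$, so $\ket{\phi_0}=\ket{0}$ up to a phase.
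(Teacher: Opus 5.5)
Your proposal is correct and follows the paper's proof essentially step by step. It uses the same split into the cases $\alpha=0$, $\beta=0$ and $\alpha\beta\neq0$, and in the last case strict concavity of $G$ forces $\{p_x\}$ to be a vertex $e_j$ that minimizes $G(e_j)$; beyond the paper, you also spell out the strict Jensen step that excludes non-vertex points and the strict monotonicity of $h$ on $[0,\infty)$ (using $a>0$), which shows the minimizing vertices are exactly those with $\abs{C_{1,j}}=\abs{C_{1,0}}$.
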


\begin{proof}
    From the proof of Theorem~\ref{Thm:PW>=P}, the inequality $P_W(\ket{\mu}) \ge P(\ket{\mu})$ follows from $G(\{p_x\}) \ge \min_j G\left(e_j\right)$, where $e_j$ denotes the extreme point of the probability simplex satisfying $p_j=1$ and $p_i=0$ for $i\neq j$.

    If $\alpha=0$, then $G(\{p_x\})$ is independent of $\{p_x\}$, and hence $P_W(\ket{\mu})=P(\ket{\mu})$.
    Next, assume that $\beta=0$. Then $P_W(\ket{\mu})=\left(\sum_x \abs{C_{0,x}}\right)^2 /2^N$, while $P(\ket{\mu})=1/2^N$.
    Hence $P_W(\ket{\mu})=P(\ket{\mu})$ if and only if $\left(\sum_x \abs{C_{0,x}}\right)^2=1$, that is, $C_{0,j}=1$ for some $j$ and $C_{0,i}=0$ for all $i\neq j$.
    This is equivalent to $\ket{\phi_0}$ being incoherent in the computational basis.
    
    It remains to consider the case $\alpha\beta\neq0$.
    Since $G(\{p_x\})$ is strictly concave in $\{p_x\}$, the equality holds only when $\{p_x\}$ is an extreme point attaining the above minimum.
    This is equivalent to $\ket{\phi_0}$ being supported on a single computational basis state $\ket{j}$ satisfying $\abs{C_{1,j}}=\max_x \abs{C_{1,x}}$. 
\end{proof}

\section{Conclusion}
In order to investigate the performance structure of the probabilistic IF-BV algorithm, we derived an explicit formula for the performance, and identified a necessary and sufficient condition for achieving maximal performance.
Moreover, under the relabeling convention on the coefficients of $\ket{\phi_1}$, we established that the IF-BV algorithm outperforms as the probabilistic BV algorithm.

Our results show that maximal performance is not determined solely by the coherence properties of a given initial state.
Instead, the key quantity depends on the weighted distribution generated by the combination of the two states comprising the initial state.
This reveals that the IF-BV algorithm possesses a substantially broader class of optimal initial states compared to the standard probabilistic BV algorithm.

Overall, our results provide a deeper understanding of how the structure of initial states influences probabilistic quantum query algorithms.
We expect that the characterization developed in this work may be useful for analyzing more general oracle-based quantum algorithms and for understanding the role of interactions between the states composing a given initial state in probabilistic quantum computation.

\begin{acknowledgments}
This research was supported by the education and training program of the Quantum Information Research Support Center, funded through the National research foundation of Korea(NRF) by the Ministry of science and ICT(MSIT) of the Korean government (No.RS-2023-NR057243).
A.S. acknowledges support from the National Science Centre Poland (Grant No. 2022/46/E/ST2/00115).
S.L. acknowledges support from 
the Institute of Information \& Communications Technology Planning \& Evaluation (IITP) grant funded by the Ministry of Science and ICT (MSIT) (No. RS-2025-02304540), 
the National Research Foundation of Korea (NRF) of Korea grants funded by the MSIT  (No. RS-2024-00432214 and No. RS-2022-NR068791) and Creation of the Quantum Information Science R\&D Ecosystem (No. RS-2023-NR068116) through the NRF funded by the MSIT.
\end{acknowledgments}

\appendix

\section{States with maximal performance in the standard probabilistic BV algorithm}
\label{App:BV maximal condition}

In this appendix, we revisit the condition for achieving maximal performance in the standard probabilistic BV algorithm.
Reference~\cite{Naseri22} states that maximal performance $P(\ket{\mu})=1$ is achievable only if the initial state is of the form $\ket{\mu}=\ket{-}\ket{\psi_{\max}}$, where $\ket{\psi_{\max}}$ is a maximally coherent state in the computational basis.
While examining the proof presented in the Supplemental Material of Ref.~\cite{Naseri22}, we observed that the derivation implicitly requires an ordering condition on the coefficients of $\ket{\phi_1}$.
Motivated by the analysis in the present work, we therefore adopt the assumption $\abs{C_{1,0}}=\max_x \abs{C_{1,x}}$, where $\ket{\phi_1}=\sum_x C_{1,x}\ket{x}$.

Recall that a general pure initial state can be written as $\ket{\mu}=\alpha\ket{+}\ket{\phi_0}+\beta\ket{-}\ket{\phi_1}$, where $\ket{\phi_j}=\sum_x C_{j,x}\ket{x}$.
For the standard probabilistic BV algorithm, the performance is given by
\begin{align*}
     P(\ket{\mu})=\frac{1}{2^N}\left(\sqrt{\abs{\alpha}^2+\abs{\beta}^2\abs{C_{1,0}}^2}+\abs{\beta}\sum_{x\neq0}\abs{C_{1,x}}\right)^2 .
\end{align*}

\begin{Prop}
\label{Prop:BV maximal condition}
    Assume that the pure initial state $\ket{\mu}$ is of the form $\ket{\mu}=\alpha\ket{+}\ket{\phi_0}+\beta\ket{-}\ket{\phi_1}$, where $\ket{\phi_i}=\sum_x C_{i,x}\ket{x}$, and that the computational basis has been relabeled, if necessary, so that $\abs{C_{1,0}}=\max_x \abs{C_{1,x}}$.  
    Then the standard probabilistic BV algorithm achieves maximal performance, $P(\ket{\mu})=1$, if and only if $\ket{\mu}=\ket{-}\ket{\psi_{\max}}$, where $\ket{\psi_{\max}}$ is a maximally coherent state in the computational basis.
\end{Prop}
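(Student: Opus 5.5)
We start from the closed formula for the standard probabilistic BV performance recalled just before the proposition,
\begin{align*}
P(\ket{\mu})=\frac{1}{2^N}\Bigl(\sqrt{a+b q_0}+\sqrt{b}\sum_{x\neq0}\sqrt{q_x}\Bigr)^2,
\end{align*}
with $a=\abs{\alpha}^2$, $b=\abs{\beta}^2$ and $q_x=\abs{C_{1,x}}^2$. The plan is to bound the bracket by $\sqrt{2^N}$ and then read off the equality conditions.

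For sufficiency, take $\ket{\mu}=\ket{-}\ket{\psi_{\max}}$. Then $a=0$, $b=1$ and $q_x=1/2^N$ for all $x$, so the bracket equals $\sum_x 2^{-N/2}=2^{N/2}$. Hence $P=1$.

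For necessity, define $S_0=\sqrt{a+bq_0}$ and $S_x=\sqrt{b q_x}$ for $x\neq 0$. These satisfy
\begin{align*}
\sum_x S_x^2=a+b\sum_x q_x=a+b=1 .
\end{align*}
Exactly as in the proof of Theorem~\ref{Thm:maximal performance}, Cauchy--Schwarz against $(1,\ldots,1)$ gives $(\sum_x S_x)^2\le 2^N$, with equality if and only if all $S_x$ are equal to $2^{-N/2}$. So $P(\ket{\mu})=1$ forces the two relations
\begin{align*}
a+bq_0=\frac{1}{2^N},\qquad b q_x=\frac{1}{2^N}\quad (x\neq0).
\end{align*}

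The main step, and the only place the ordering convention enters, is to conclude $a=0$ from these relations. Using $q_0\ge q_x$ for every $x$, we get
\begin{align*}
\frac{1}{2^N}=a+bq_0\ge a+bq_x=a+\frac{1}{2^N}
\end{align*}
for any $x\neq 0$, and such an $x$ exists because $N\ge1$. Hence $a\le 0$, so $\alpha=0$ and $b=1$. The relations then give $q_x=2^{-N}$ for every $x$, including $x=0$. Therefore $\ket{\phi_1}$ has all coefficient moduli equal to $2^{-N/2}$, i.e.\ it is maximally coherent, and $\ket{\mu}=\ket{-}\ket{\phi_1}$.

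We will add a remark that without the convention the argument fails. The maximizing index need not be $0$, and one can then have $a>0$ with $a+bq_0=bq_x$ for all $x\neq0$, which gives $P=1$ for states not of the form $\ket{-}\ket{\psi_{\max}}$. This is consistent with the counterexample given earlier in the paper.
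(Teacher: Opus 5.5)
Your proof is correct and takes essentially the same route as the paper: both reduce $P(\ket{\mu})=1$ to the equal-modulus conditions $a+bq_0=bq_x=2^{-N}$ for $x\neq0$, then use the ordering $q_0\ge q_x$ to force $\alpha=0$, $\abs{\beta}=1$ and $q_x=2^{-N}$. Your version is slightly tidier in two places: you derive the equal-modulus conditions directly via Cauchy--Schwarz instead of citing Naseri \emph{et al.}, and your one-line comparison $2^{-N}=a+bq_0\ge a+bq_x=a+2^{-N}$ replaces the paper's two-sided bound $\abs{C_{1,0}}^2\ge 2^{-N}$, $\abs{C_{1,0}}^2\le 2^{-N}$.
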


\begin{proof}
    As in Ref.~\cite{Naseri22}, the relevant final states can be expressed in a suitable orthonormal basis.
    In that basis, maximal performance $P(\ket{\mu})=1$ requires the corresponding coefficients to have equal modulus.
    Hence, if $P(\ket{\mu})=1$, then   
\begin{align}
\label{eq:43}
    \abs{\beta}^2 \abs{C_{1,x}}^2 =\frac{1}{2^N} \quad \forall x \neq 0,\\
\label{eq:44}
    \abs{\alpha}^2+\abs{\beta}^2\abs{C_{1,0}}^2=\frac{1}{2^N}.
\end{align}
From Eq.~(\ref{eq:43}), since the computational basis is ordered such that
$\abs{C_{1,0}} \ge \abs{C_{1,x}}$ for all $x$ and $\abs{\beta}^2\le1$, we obtain $\abs{C_{1,0}}^2\ge{1}/{2^N}$.
On the other hand, since $\abs{\alpha}^2=1-\abs{\beta}^2$, Eq.~(\ref{eq:44}) can be rewritten as
\begin{align*}
    \abs{\beta}^2\left( 1-\abs{C_{1,0}}^2 \right)=1-\frac{1}{2^N}.
\end{align*}
Since $\abs{\beta}^2\le1$, this implies $\abs{C_{1,0}}^2\le{1}/{2^N}$.
Combining the two inequalities, we obtain $\abs{C_{1,0}}^2={1}/{2^N}$.
Also, substituting this into $\abs{\beta}^2 \abs{C_{1,0}}^2 \ge {1}/{2^N}$, we conclude that $\abs{\beta}^2=1$.
Finally, $\ket{\phi_1}$ is maximally coherent in the computational basis and $\ket{\mu}=\ket{-}\ket{\psi_{\max}}$, where $\ket{\psi_{\max}}$ is a maximally coherent state in the computational basis.

Conversely, if $\ket{\mu}=\ket{-}\ket{\psi_{\max}}$, then a direct substitution into the performance formula gives $P(\ket{\mu})=1$.
Therefore, $P(\ket{\mu})=1$ if and only if $\ket{\mu}=\ket{-}\ket{\psi_{\max}}$, where $\ket{\psi_{\max}}$ is a maximally coherent state in the computational basis.
\end{proof}

Proposition~\ref{Prop:BV maximal condition} shows that, under the relabeling convention on the coefficients of $\ket{\phi_1}$, the maximal performance condition stated in Ref.~\cite{Naseri22} is valid.
In other words, once the largest coefficient of $\ket{\phi_1}$ is assigned to the distinguished computational basis element appearing in the performance formula, maximal performance forces the initial state to be of the form $\ket{-}\ket{\psi_{\max}}$.

However, the relabeling convention is not merely a technical detail.
If this convention is not imposed, the above characterization can fail.
The following remark illustrates this point with an explicit example.

\begin{Rem}
\label{Rem:BV maximal condition}
The relabeling convention in Proposition~\ref{Prop:BV maximal condition} is essential.
If this convention is not imposed, the conclusion of the proposition does not necessarily hold.

As an explicit example, consider the case $N=2$ and choose
\begin{gather*}
    \abs{\alpha}^2=\frac{1}{8}, \quad
    \abs{\beta}^2=\frac{7}{8}, \\
    \ket{\phi_0}=\ket{00}, \\
    \ket{\phi_1}=
    \frac{1}{\sqrt{7}}\ket{00}
    +\sqrt{\frac{2}{7}}\ket{01}
    +\sqrt{\frac{2}{7}}\ket{10}
    +\sqrt{\frac{2}{7}}\ket{11}.
\end{gather*}
In this case, the convention $\abs{C_{1,0}}=\max_x \abs{C_{1,x}}$ is violated.

Nevertheless, substituting the above coefficients into the performance formula gives
\begin{align*}
    P(\ket{\mu})
    =\frac{1}{4}\left(\sqrt{\frac{1}{8}+\frac{7}{8}\cdot\frac{1}{7}}
    +\sqrt{\frac{7}{8}}\left(3\sqrt{\frac{2}{7}}\right)\right)^2 =1.
\end{align*}
However, this state is not of the form $\ket{-}\ket{\psi_{\max}}$, since $\abs{\alpha}^2=1/8\neq0$ and $\ket{\phi_1}$ is not maximally coherent.
This example shows that the ordering condition is necessary for the maximal-performance characterization in Proposition~\ref{Prop:BV maximal condition}. 
\end{Rem}

Remark~\ref{Rem:BV maximal condition} demonstrates that the relabeling convention plays an essential role in the maximal-performance characterization of the standard probabilistic BV algorithm.
Without this convention, there exist maximal-performance states that are not of the form $\ket{-}\ket{\psi_{\max}}$.
Thus, the statement in Ref.~\cite{Naseri22} should be understood as valid under the relabeling convention specified above.

\bibliography{references}

\begin{thebibliography}{23}%
\makeatletter
\providecommand \@ifxundefined [1]{%
 \@ifx{#1\undefined}
}%
\providecommand \@ifnum [1]{%
 \ifnum #1\expandafter \@firstoftwo
 \else \expandafter \@secondoftwo
 \fi
}%
\providecommand \@ifx [1]{%
 \ifx #1\expandafter \@firstoftwo
 \else \expandafter \@secondoftwo
 \fi
}%
\providecommand \natexlab [1]{#1}%
\providecommand \enquote  [1]{``#1''}%
\providecommand \bibnamefont  [1]{#1}%
\providecommand \bibfnamefont [1]{#1}%
\providecommand \citenamefont [1]{#1}%
\providecommand \href@noop [0]{\@secondoftwo}%
\providecommand \href [0]{\begingroup \@sanitize@url \@href}%
\providecommand \@href[1]{\@@startlink{#1}\@@href}%
\providecommand \@@href[1]{\endgroup#1\@@endlink}%
\providecommand \@sanitize@url [0]{\catcode `\\12\catcode `\$12\catcode `\&12\catcode `\#12\catcode `\^12\catcode `\_12\catcode `\%12\relax}%
\providecommand \@@startlink[1]{}%
\providecommand \@@endlink[0]{}%
\providecommand \url  [0]{\begingroup\@sanitize@url \@url }%
\providecommand \@url [1]{\endgroup\@href {#1}{\urlprefix }}%
\providecommand \urlprefix  [0]{URL }%
\providecommand \Eprint [0]{\href }%
\providecommand \doibase [0]{https://doi.org/}%
\providecommand \selectlanguage [0]{\@gobble}%
\providecommand \bibinfo  [0]{\@secondoftwo}%
\providecommand \bibfield  [0]{\@secondoftwo}%
\providecommand \translation [1]{[#1]}%
\providecommand \BibitemOpen [0]{}%
\providecommand \bibitemStop [0]{}%
\providecommand \bibitemNoStop [0]{.\EOS\space}%
\providecommand \EOS [0]{\spacefactor3000\relax}%
\providecommand \BibitemShut  [1]{\csname bibitem#1\endcsname}%
\let\auto@bib@innerbib\@empty
\bibitem [{\citenamefont {Chitambar}\ and\ \citenamefont {Gour}(2019)}]{ChitambarGour19}%
  \BibitemOpen
  \bibfield  {author} {\bibinfo {author} {\bibfnamefont {E.}~\bibnamefont {Chitambar}}\ and\ \bibinfo {author} {\bibfnamefont {G.}~\bibnamefont {Gour}},\ }\bibfield  {title} {\bibinfo {title} {Quantum resource theories},\ }\href@noop {} {\bibfield  {journal} {\bibinfo  {journal} {Reviews of modern physics}\ }\textbf {\bibinfo {volume} {91}},\ \bibinfo {pages} {025001} (\bibinfo {year} {2019})}\BibitemShut {NoStop}%
\bibitem [{\citenamefont {Horodecki}\ \emph {et~al.}(2009)\citenamefont {Horodecki}, \citenamefont {Horodecki}, \citenamefont {Horodecki},\ and\ \citenamefont {Horodecki}}]{horodecki2009quantum}%
  \BibitemOpen
  \bibfield  {author} {\bibinfo {author} {\bibfnamefont {R.}~\bibnamefont {Horodecki}}, \bibinfo {author} {\bibfnamefont {P.}~\bibnamefont {Horodecki}}, \bibinfo {author} {\bibfnamefont {M.}~\bibnamefont {Horodecki}},\ and\ \bibinfo {author} {\bibfnamefont {K.}~\bibnamefont {Horodecki}},\ }\bibfield  {title} {\bibinfo {title} {Quantum entanglement},\ }\href@noop {} {\bibfield  {journal} {\bibinfo  {journal} {Reviews of modern physics}\ }\textbf {\bibinfo {volume} {81}},\ \bibinfo {pages} {865} (\bibinfo {year} {2009})}\BibitemShut {NoStop}%
\bibitem [{\citenamefont {Plenio}\ and\ \citenamefont {Virmani}(2005)}]{PlenioVirmani05}%
  \BibitemOpen
  \bibfield  {author} {\bibinfo {author} {\bibfnamefont {M.~B.}\ \bibnamefont {Plenio}}\ and\ \bibinfo {author} {\bibfnamefont {S.}~\bibnamefont {Virmani}},\ }\bibfield  {title} {\bibinfo {title} {An introduction to entanglement measures},\ }\href@noop {} {\bibfield  {journal} {\bibinfo  {journal} {arXiv preprint quant-ph/0504163}\ } (\bibinfo {year} {2005})}\BibitemShut {NoStop}%
\bibitem [{\citenamefont {G{\"u}hne}\ and\ \citenamefont {T{\'o}th}(2009)}]{GuhneToth09}%
  \BibitemOpen
  \bibfield  {author} {\bibinfo {author} {\bibfnamefont {O.}~\bibnamefont {G{\"u}hne}}\ and\ \bibinfo {author} {\bibfnamefont {G.}~\bibnamefont {T{\'o}th}},\ }\bibfield  {title} {\bibinfo {title} {Entanglement detection},\ }\href@noop {} {\bibfield  {journal} {\bibinfo  {journal} {Physics Reports}\ }\textbf {\bibinfo {volume} {474}},\ \bibinfo {pages} {1} (\bibinfo {year} {2009})}\BibitemShut {NoStop}%
\bibitem [{\citenamefont {Ma}\ \emph {et~al.}(2025)\citenamefont {Ma}, \citenamefont {Li},\ and\ \citenamefont {Shang}}]{ma2024multipartite}%
  \BibitemOpen
  \bibfield  {author} {\bibinfo {author} {\bibfnamefont {M.}~\bibnamefont {Ma}}, \bibinfo {author} {\bibfnamefont {Y.}~\bibnamefont {Li}},\ and\ \bibinfo {author} {\bibfnamefont {J.}~\bibnamefont {Shang}},\ }\bibfield  {title} {\bibinfo {title} {Multipartite entanglement measures: A review},\ }\href@noop {} {\bibfield  {journal} {\bibinfo  {journal} {Fundamental Research}\ }\textbf {\bibinfo {volume} {5}},\ \bibinfo {pages} {2489} (\bibinfo {year} {2025})}\BibitemShut {NoStop}%
\bibitem [{\citenamefont {Jozsa}\ and\ \citenamefont {Linden}(2003)}]{JozsaLinden03}%
  \BibitemOpen
  \bibfield  {author} {\bibinfo {author} {\bibfnamefont {R.}~\bibnamefont {Jozsa}}\ and\ \bibinfo {author} {\bibfnamefont {N.}~\bibnamefont {Linden}},\ }\bibfield  {title} {\bibinfo {title} {On the role of entanglement in quantum-computational speed-up},\ }\href@noop {} {\bibfield  {journal} {\bibinfo  {journal} {Proceedings of the Royal Society of London. Series A: Mathematical, Physical and Engineering Sciences}\ }\textbf {\bibinfo {volume} {459}},\ \bibinfo {pages} {2011} (\bibinfo {year} {2003})}\BibitemShut {NoStop}%
\bibitem [{\citenamefont {Vidal}(2003)}]{Vidal03}%
  \BibitemOpen
  \bibfield  {author} {\bibinfo {author} {\bibfnamefont {G.}~\bibnamefont {Vidal}},\ }\bibfield  {title} {\bibinfo {title} {Efficient classical simulation of slightly entangled quantum computations},\ }\href@noop {} {\bibfield  {journal} {\bibinfo  {journal} {Physical review letters}\ }\textbf {\bibinfo {volume} {91}},\ \bibinfo {pages} {147902} (\bibinfo {year} {2003})}\BibitemShut {NoStop}%
\bibitem [{\citenamefont {Knill}\ and\ \citenamefont {Laflamme}(1998)}]{KnillLaflamme98}%
  \BibitemOpen
  \bibfield  {author} {\bibinfo {author} {\bibfnamefont {E.}~\bibnamefont {Knill}}\ and\ \bibinfo {author} {\bibfnamefont {R.}~\bibnamefont {Laflamme}},\ }\bibfield  {title} {\bibinfo {title} {Power of one bit of quantum information},\ }\href@noop {} {\bibfield  {journal} {\bibinfo  {journal} {Physical Review Letters}\ }\textbf {\bibinfo {volume} {81}},\ \bibinfo {pages} {5672} (\bibinfo {year} {1998})}\BibitemShut {NoStop}%
\bibitem [{\citenamefont {Biham}\ \emph {et~al.}(2004)\citenamefont {Biham}, \citenamefont {Brassard}, \citenamefont {Kenigsberg},\ and\ \citenamefont {Mor}}]{BIHAM200415}%
  \BibitemOpen
  \bibfield  {author} {\bibinfo {author} {\bibfnamefont {E.}~\bibnamefont {Biham}}, \bibinfo {author} {\bibfnamefont {G.}~\bibnamefont {Brassard}}, \bibinfo {author} {\bibfnamefont {D.}~\bibnamefont {Kenigsberg}},\ and\ \bibinfo {author} {\bibfnamefont {T.}~\bibnamefont {Mor}},\ }\bibfield  {title} {\bibinfo {title} {Quantum computing without entanglement},\ }\href@noop {} {\bibfield  {journal} {\bibinfo  {journal} {Theoretical Computer Science}\ }\textbf {\bibinfo {volume} {320}},\ \bibinfo {pages} {15} (\bibinfo {year} {2004})}\BibitemShut {NoStop}%
\bibitem [{\citenamefont {Datta}\ \emph {et~al.}(2008)\citenamefont {Datta}, \citenamefont {Shaji},\ and\ \citenamefont {Caves}}]{Datta08}%
  \BibitemOpen
  \bibfield  {author} {\bibinfo {author} {\bibfnamefont {A.}~\bibnamefont {Datta}}, \bibinfo {author} {\bibfnamefont {A.}~\bibnamefont {Shaji}},\ and\ \bibinfo {author} {\bibfnamefont {C.~M.}\ \bibnamefont {Caves}},\ }\bibfield  {title} {\bibinfo {title} {Quantum discord and the power of one qubit},\ }\href@noop {} {\bibfield  {journal} {\bibinfo  {journal} {Physical review letters}\ }\textbf {\bibinfo {volume} {100}},\ \bibinfo {pages} {050502} (\bibinfo {year} {2008})}\BibitemShut {NoStop}%
\bibitem [{\citenamefont {Lanyon}\ \emph {et~al.}(2008)\citenamefont {Lanyon}, \citenamefont {Barbieri}, \citenamefont {Almeida},\ and\ \citenamefont {White}}]{lanyon2008experimental}%
  \BibitemOpen
  \bibfield  {author} {\bibinfo {author} {\bibfnamefont {B.~P.}\ \bibnamefont {Lanyon}}, \bibinfo {author} {\bibfnamefont {M.}~\bibnamefont {Barbieri}}, \bibinfo {author} {\bibfnamefont {M.~P.}\ \bibnamefont {Almeida}},\ and\ \bibinfo {author} {\bibfnamefont {A.~G.}\ \bibnamefont {White}},\ }\bibfield  {title} {\bibinfo {title} {Experimental quantum computing without entanglement},\ }\href@noop {} {\bibfield  {journal} {\bibinfo  {journal} {Physical review letters}\ }\textbf {\bibinfo {volume} {101}},\ \bibinfo {pages} {200501} (\bibinfo {year} {2008})}\BibitemShut {NoStop}%
\bibitem [{\citenamefont {Baumgratz}\ \emph {et~al.}(2014)\citenamefont {Baumgratz}, \citenamefont {Cramer},\ and\ \citenamefont {Plenio}}]{Baumgratz14}%
  \BibitemOpen
  \bibfield  {author} {\bibinfo {author} {\bibfnamefont {T.}~\bibnamefont {Baumgratz}}, \bibinfo {author} {\bibfnamefont {M.}~\bibnamefont {Cramer}},\ and\ \bibinfo {author} {\bibfnamefont {M.~B.}\ \bibnamefont {Plenio}},\ }\bibfield  {title} {\bibinfo {title} {Quantifying coherence},\ }\href@noop {} {\bibfield  {journal} {\bibinfo  {journal} {Physical Review Letters}\ }\textbf {\bibinfo {volume} {113}},\ \bibinfo {pages} {140401} (\bibinfo {year} {2014})}\BibitemShut {NoStop}%
\bibitem [{\citenamefont {Streltsov}\ \emph {et~al.}(2017)\citenamefont {Streltsov}, \citenamefont {Adesso},\ and\ \citenamefont {Plenio}}]{Streltsov17}%
  \BibitemOpen
  \bibfield  {author} {\bibinfo {author} {\bibfnamefont {A.}~\bibnamefont {Streltsov}}, \bibinfo {author} {\bibfnamefont {G.}~\bibnamefont {Adesso}},\ and\ \bibinfo {author} {\bibfnamefont {M.~B.}\ \bibnamefont {Plenio}},\ }\bibfield  {title} {\bibinfo {title} {Colloquium: Quantum coherence as a resource},\ }\href@noop {} {\bibfield  {journal} {\bibinfo  {journal} {Reviews of Modern Physics}\ }\textbf {\bibinfo {volume} {89}},\ \bibinfo {pages} {041003} (\bibinfo {year} {2017})}\BibitemShut {NoStop}%
\bibitem [{\citenamefont {Vidal}\ and\ \citenamefont {Tarrach}(1999)}]{vidal1999robustness}%
  \BibitemOpen
  \bibfield  {author} {\bibinfo {author} {\bibfnamefont {G.}~\bibnamefont {Vidal}}\ and\ \bibinfo {author} {\bibfnamefont {R.}~\bibnamefont {Tarrach}},\ }\bibfield  {title} {\bibinfo {title} {Robustness of entanglement},\ }\href@noop {} {\bibfield  {journal} {\bibinfo  {journal} {Physical Review A}\ }\textbf {\bibinfo {volume} {59}},\ \bibinfo {pages} {141} (\bibinfo {year} {1999})}\BibitemShut {NoStop}%
\bibitem [{\citenamefont {Steiner}(2003)}]{steiner2003generalized}%
  \BibitemOpen
  \bibfield  {author} {\bibinfo {author} {\bibfnamefont {M.}~\bibnamefont {Steiner}},\ }\bibfield  {title} {\bibinfo {title} {Generalized robustness of entanglement},\ }\href@noop {} {\bibfield  {journal} {\bibinfo  {journal} {Physical Review A}\ }\textbf {\bibinfo {volume} {67}},\ \bibinfo {pages} {054305} (\bibinfo {year} {2003})}\BibitemShut {NoStop}%
\bibitem [{\citenamefont {Napoli}\ \emph {et~al.}(2016)\citenamefont {Napoli}, \citenamefont {Bromley}, \citenamefont {Cianciaruso}, \citenamefont {Piani}, \citenamefont {Johnston},\ and\ \citenamefont {Adesso}}]{Napoli16}%
  \BibitemOpen
  \bibfield  {author} {\bibinfo {author} {\bibfnamefont {C.}~\bibnamefont {Napoli}}, \bibinfo {author} {\bibfnamefont {T.~R.}\ \bibnamefont {Bromley}}, \bibinfo {author} {\bibfnamefont {M.}~\bibnamefont {Cianciaruso}}, \bibinfo {author} {\bibfnamefont {M.}~\bibnamefont {Piani}}, \bibinfo {author} {\bibfnamefont {N.}~\bibnamefont {Johnston}},\ and\ \bibinfo {author} {\bibfnamefont {G.}~\bibnamefont {Adesso}},\ }\bibfield  {title} {\bibinfo {title} {Robustness of coherence: an operational and observable measure of quantum coherence},\ }\href@noop {} {\bibfield  {journal} {\bibinfo  {journal} {Physical review letters}\ }\textbf {\bibinfo {volume} {116}},\ \bibinfo {pages} {150502} (\bibinfo {year} {2016})}\BibitemShut {NoStop}%
\bibitem [{\citenamefont {Piani}\ \emph {et~al.}(2016)\citenamefont {Piani}, \citenamefont {Cianciaruso}, \citenamefont {Bromley}, \citenamefont {Napoli}, \citenamefont {Johnston},\ and\ \citenamefont {Adesso}}]{Piani16}%
  \BibitemOpen
  \bibfield  {author} {\bibinfo {author} {\bibfnamefont {M.}~\bibnamefont {Piani}}, \bibinfo {author} {\bibfnamefont {M.}~\bibnamefont {Cianciaruso}}, \bibinfo {author} {\bibfnamefont {T.~R.}\ \bibnamefont {Bromley}}, \bibinfo {author} {\bibfnamefont {C.}~\bibnamefont {Napoli}}, \bibinfo {author} {\bibfnamefont {N.}~\bibnamefont {Johnston}},\ and\ \bibinfo {author} {\bibfnamefont {G.}~\bibnamefont {Adesso}},\ }\bibfield  {title} {\bibinfo {title} {Robustness of asymmetry and coherence of quantum states},\ }\href@noop {} {\bibfield  {journal} {\bibinfo  {journal} {Physical Review A}\ }\textbf {\bibinfo {volume} {93}},\ \bibinfo {pages} {042107} (\bibinfo {year} {2016})}\BibitemShut {NoStop}%
\bibitem [{\citenamefont {Piani}\ and\ \citenamefont {Watrous}(2009)}]{PianiWatrous09}%
  \BibitemOpen
  \bibfield  {author} {\bibinfo {author} {\bibfnamefont {M.}~\bibnamefont {Piani}}\ and\ \bibinfo {author} {\bibfnamefont {J.}~\bibnamefont {Watrous}},\ }\bibfield  {title} {\bibinfo {title} {All entangled states are useful for channel discrimination},\ }\href@noop {} {\bibfield  {journal} {\bibinfo  {journal} {Physical Review Letters}\ }\textbf {\bibinfo {volume} {102}},\ \bibinfo {pages} {250501} (\bibinfo {year} {2009})}\BibitemShut {NoStop}%
\bibitem [{\citenamefont {Takagi}\ and\ \citenamefont {Regula}(2019)}]{Takagi19}%
  \BibitemOpen
  \bibfield  {author} {\bibinfo {author} {\bibfnamefont {R.}~\bibnamefont {Takagi}}\ and\ \bibinfo {author} {\bibfnamefont {B.}~\bibnamefont {Regula}},\ }\bibfield  {title} {\bibinfo {title} {General resource theories in quantum mechanics and beyond: Operational characterization via discrimination tasks},\ }\href@noop {} {\bibfield  {journal} {\bibinfo  {journal} {Physical Review X}\ }\textbf {\bibinfo {volume} {9}},\ \bibinfo {pages} {031053} (\bibinfo {year} {2019})}\BibitemShut {NoStop}%
\bibitem [{\citenamefont {Bernstein}\ and\ \citenamefont {Vazirani}(1997)}]{Bernstein97}%
  \BibitemOpen
  \bibfield  {author} {\bibinfo {author} {\bibfnamefont {E.}~\bibnamefont {Bernstein}}\ and\ \bibinfo {author} {\bibfnamefont {U.}~\bibnamefont {Vazirani}},\ }\bibfield  {title} {\bibinfo {title} {Quantum complexity theory},\ }\href@noop {} {\bibfield  {journal} {\bibinfo  {journal} {SIAM Journal on Computing}\ }\textbf {\bibinfo {volume} {26}},\ \bibinfo {pages} {1411} (\bibinfo {year} {1997})}\BibitemShut {NoStop}%
\bibitem [{\citenamefont {Naseri}\ \emph {et~al.}(2022)\citenamefont {Naseri}, \citenamefont {Kondra}, \citenamefont {Goswami}, \citenamefont {Fellous-Asiani},\ and\ \citenamefont {Streltsov}}]{Naseri22}%
  \BibitemOpen
  \bibfield  {author} {\bibinfo {author} {\bibfnamefont {M.}~\bibnamefont {Naseri}}, \bibinfo {author} {\bibfnamefont {T.~V.}\ \bibnamefont {Kondra}}, \bibinfo {author} {\bibfnamefont {S.}~\bibnamefont {Goswami}}, \bibinfo {author} {\bibfnamefont {M.}~\bibnamefont {Fellous-Asiani}},\ and\ \bibinfo {author} {\bibfnamefont {A.}~\bibnamefont {Streltsov}},\ }\bibfield  {title} {\bibinfo {title} {Entanglement and coherence in the bernstein-vazirani algorithm},\ }\href@noop {} {\bibfield  {journal} {\bibinfo  {journal} {Physical Review A}\ }\textbf {\bibinfo {volume} {106}},\ \bibinfo {pages} {062429} (\bibinfo {year} {2022})}\BibitemShut {NoStop}%
\bibitem [{\citenamefont {Chi}\ \emph {et~al.}(2001)\citenamefont {Chi}, \citenamefont {Kim},\ and\ \citenamefont {Lee}}]{Chi01}%
  \BibitemOpen
  \bibfield  {author} {\bibinfo {author} {\bibfnamefont {D.~P.}\ \bibnamefont {Chi}}, \bibinfo {author} {\bibfnamefont {J.}~\bibnamefont {Kim}},\ and\ \bibinfo {author} {\bibfnamefont {S.}~\bibnamefont {Lee}},\ }\bibfield  {title} {\bibinfo {title} {Initialization-free generalized deutsch-jozsa algorithm},\ }\href@noop {} {\bibfield  {journal} {\bibinfo  {journal} {Journal of Physics A: Mathematical and General}\ }\textbf {\bibinfo {volume} {34}},\ \bibinfo {pages} {5251} (\bibinfo {year} {2001})}\BibitemShut {NoStop}%
\bibitem [{\citenamefont {Watrous}(2018)}]{watrous2018theory}%
  \BibitemOpen
  \bibfield  {author} {\bibinfo {author} {\bibfnamefont {J.}~\bibnamefont {Watrous}},\ }\href@noop {} {\emph {\bibinfo {title} {The theory of quantum information}}}\ (\bibinfo  {publisher} {Cambridge university press},\ \bibinfo {year} {2018})\BibitemShut {NoStop}%
\end{thebibliography}%

\end{document}